\newcommand{\comment}[1]{}
\newtheorem{theorem}{Theorem}{\bfseries}{\itshape}
\newtheorem{lemma}{Lemma}{\bfseries}{\itshape}
\newtheorem{assumption}{Assumption}{\bfseries}{\rm}
\theoremstyle{definition}
\newtheorem{definition}{Definition}
\newtheorem{rem}{Remark}
\theoremstyle{plain}
\definecolor{linkblue}{named}{Blue}
\newlength\problemsep
\title{On geodesic disks enclosing many points\footnote{This is an extended version of a paper appearing in Proceedings of WADS 2025.}}
\author{%
  Prosenjit Bose,%
  \thanks{\affil{Carleton University}, 
          \email{jit@scs.carleton.ca, TylerTuttle@cmail.carleton.ca}}\,
  Guillermo Esteban,%
  \thanks{\affil{Universidad de Alcalá},
          \email{\{g.esteban,david.orden\}@uah.es}}\,
  David Orden,%
  \footnotemark[3]\,
  Rodrigo I. Silveira,%
  \thanks{\affil{Universitat Politècnica de Catalunya}, 
          \email{rodrigo.silveira@upc.edu}}\,
  and Tyler Tuttle\footnotemark[2]
}
\begin{document}

\date{}

\maketitle
\begin{abstract}
Let $ \Pi(n) $ be the largest number such that for every set $ S $ of $ n $ points in a polygon~$ P $, there always exist two points $ x, y \in S $, where every geodesic disk containing $ x $ and $ y $  contains $ \Pi(n) $ points of~$ S $. We establish upper and lower bounds for $ \Pi(n)$, and show that $ \left\lceil \frac{n}{5}\right\rceil+1 \leq \Pi(n) \leq \left\lceil \frac{n}{4} \right\rceil +1 $. We also show that there always exist two points $x, y\in S$ such that every geodesic disk with $x$ and $y$ on its boundary contains at least $ \frac{n}{7+\sqrt{37}} \approx \left\lceil \frac{n}{13.1} \right\rceil$ points both inside and outside the disk. For the special case where the points of $ S $ are restricted to be the vertices of a geodesically convex polygon we give a tight bound of $\left\lceil \frac{n}{3} \right\rceil + 1$.  We provide the same tight bound when we only consider geodesic disks having $ x $ and $ y $ as diametral endpoints. We give upper and lower bounds of $\left\lceil \frac{n}{5} \right\rceil + 1 $ and $\frac{n}{6+\sqrt{26}} \approx \left\lceil \frac{n}{11.1} \right\rceil$, respectively, for the two-colored version of the problem. Finally, for the two-colored variant we show that there always exist two points $x, y\in S$ where $x$ and $y$ have different colors and every geodesic disk with $x$ and $y$ on its boundary contains at least $\left\lceil \frac{n}{27.1}\right\rceil+1$ points both inside and outside the disk. 
\end{abstract}

\section{Introduction}
\label{sec:introduction}

Given a set $ S $ of $ n $ points in the plane in general position---no three of them are collinear and no four of them are cocircular---there always exist two points $u, v \in S$ such that any disk that contains $u$ and $v$ also contains a constant fraction of $S$. Neumann-Lara and Urrutia~\cite{neumann1988combinatorial} proved that this constant fraction is at least $ \left\lceil \frac{n-2}{60} \right\rceil $. This bound was improved in a series of papers~\cite{hayward1989note,hayward1989some} culminating in the best known lower bound of about $\frac{n}{4.7}$ by Edelsbrunner et al.~\cite{edelsbrunner1989circles}. Hayward et al.~\cite{hayward1989some} gave an upper bound by constructing a set of $ n $ points such that for every pair of points $u, v$, there exists a disk with $u, v$ on its boundary containing less than $ \left\lceil \frac{n}{4} \right\rceil $ points, thereby showing that $ \left\lceil \frac{n}{4} \right\rceil + 1 $ is an upper bound for the problem. In addition, they studied the problem for point sets in convex position, giving a tight bound of $ \left\lceil \frac{n}{3} \right\rceil + 1 $. Another version of the problem is to consider disks in the plane having the points $ u $ and $ v $ as diametral endpoints~\cite{akiyama1996circles}. In this case, a tight bound of $ \left\lceil \frac{n}{3} \right\rceil + 1 $ for both the convex and non-convex cases was shown.

The best known lower bound of $ \left(\frac{1}{2} - \frac{1}{\sqrt{12}}\right)n + O(1) \approx \frac{n}{4.7} $ was obtained in multiple ways, see~\cite{claverol2021circles,edelsbrunner1989circles,ramos2009depth}. Edelsbrunner et al.~\cite{edelsbrunner1989circles} were the first to show this using techniques related to the order-$k$ Voronoi diagram. Ramos and Viaña~\cite{ramos2009depth} used known results about $j$-facets of point sets in~$\mathbb{R}^3 $. In fact, these results allowed Ramos and Viaña to prove that there is always a pair of points such that any disk through them has at least $ \frac{n}{4.7} $ points both inside and outside the disk.

Generalizations of the original problem to higher dimensions have been studied by Bárány et al.~\cite{barany1989combinatorial} who proved that any set $ S \subset \mathbb{R}^d $ of $n$ points in general position contains a subset~$ A $ of size $ m = \left\lfloor \frac{1}{2} (d + 3) \right\rfloor $ such that, any ball with $ A $ on its boundary, contains at least $ \frac{m!(d - m - 1)!}{d!}n $ other points of $ S $, for $ d > m$. Moreover, Bárány and Larman extended the results in~\cite{neumann1988combinatorial} from Euclidean balls to ellipses in $ \mathbb{R}^2 $ and, more generally, to quadrics in $ \mathbb{R}^d $~\cite{barany1990combinatorial}. In particular, they showed that any set $ S \subset \mathbb{R}^d $ of $n$ points contains a subset $ A $, of size $ \left\lfloor \frac{1}{4}d(d + 3) \right\rfloor + 1 $, with the property that any quadric through $ A $, contains at least $ \frac{n}{s2^{s+1}} $ points of $ S $, where $ s = \frac{(d + 1)(d + 2)}{2} $.

A colored version of the problem has also been studied in the literature: Given a set $ S $ of $\frac{n}{2}$ red and $\frac{n}{2}$ blue points in the plane, there always exists a bichromatic pair of points $u, v\in S$ such that any disk containing $u$ and $v$ contains at least a constant fraction of the $n$ points.
Prodromou~\cite{prodromou2007combinatorial} proved\footnote{Using techniques from Edelsbrunner et al.~\cite{edelsbrunner1989circles}, a PhD thesis written in Spanish and not published elsewhere had claimed before a lower bound of $ \frac{23n}{250} \approx \frac{n}{11} $~\cite{urrutiaproblemas}.}  that any set $ S \subset \mathbb{R}^d $ colored with $ k = \left\lfloor \frac{d+3}{2} \right\rfloor $ colors contains a subset $ A \subset S $ of $ k $ points, one of each color, such that any ball through all points in $ A $ contains at least $ \frac{n}{2k3^k} $ points of $ S $, which gives a fraction $\frac{n}{36}$ for points in the plane. This fraction for the two-dimensional case was later improved in~\cite{claverol2021circles} to $ \left(\frac{1}{2}-\frac{1}{\sqrt{8}}\right)n-o(n) \approx \frac{n}{6.8} $.

We note that all the upper bounds in the Euclidean setting translate directly to the geodesic setting by enclosing the constructions in a large enough triangle. Thus, in this paper, we focus mainly on the lower bounds, and address all variants of the problem in the case where~$ S $ is a set of~$n$ points inside a simple polygon~$ P $. In addition, throughout the paper, we consider that $ n \geq 28 $, so that all our results hold.

{\renewcommand{\arraystretch}{1.2}
\begin{table}
    \centering
    \begin{tabular}{c|c|c|c|c|}
    
    \hline
    \multicolumn{1}{|c|}{Variant} & \multicolumn{2}{c|}{Euclidean setting} & \multicolumn{2}{c|}{Geodesic setting} \\
    \cline{2-5}
    \multicolumn{1}{|c|}{(see Def.~\ref{dfn:variants})} & Lower bound & Upper bound  & Lower bound  & Upper bound \\
       \hline
       \multicolumn{1}{|c|}{$\Pi(n)$}  & $ \lfloor \frac{n}{4.7}\rfloor $~\cite{edelsbrunner1989circles} & $ \lceil \frac{n}{4} \rceil + 1 $~\cite{hayward1989some} & $ \lceil\frac{n}{5}\rceil+1 $ (Thm.~\ref{thm:noverfive}) & $ \lceil \frac{n}{4}\rceil + 1 $ \\
       \hline
       \multicolumn{1}{|c|}{$\overline{\Pi}(n) $} & $ \lceil \frac{n}{3} \rceil + 1 $~\cite{hayward1989some} & $ \lceil \frac{n}{3} \rceil + 1 $~\cite{hayward1989some} & $ \lceil \frac{n}{3} \rceil + 1 $ (Thm.~\ref{lemma: plane diameter ball lower}) & $ \lceil \frac{n}{3} \rceil + 1 $\\
       \hline
       \multicolumn{1}{|c|}{$ \Pi^{in\mbox{-}out}(n) $} & $ \lfloor \frac{n}{4.7}\rfloor $~\cite{ramos2009depth} & $ \lceil \frac{n}{4} \rceil + 1 $~\cite{hayward1989some} & $ \lceil\frac{n}{13.1} \rceil+1$ (Thm.~\ref{thm:noverfive2}) & $ \lceil \frac{n}{4} \rceil + 1 $ \\
       \hline
       \multicolumn{1}{|c|}{$ \Pi^{diam}(n) $}  & $ \lceil \frac{n}{3} \rceil + 1 $~\cite{akiyama1996circles} & $ \lceil \frac{n}{3} \rceil + 1 $~\cite{akiyama1996circles} & $ \lceil \frac{n}{3} \rceil + 1 $ (Thm.~\ref{thm:diamnover3})& $ \lceil \frac{n}{3} \rceil + 1 $\\
       \hline
       \multicolumn{1}{|c|}{$\Pi^{bichrom}(n)$} & $ \lfloor \frac{n}{6.8}\rfloor $~\cite{claverol2021circles} & $\lceil \frac{n}{5} \rceil + 1 $ (Thm.~\ref{thm:upper}) & $\lceil\frac{n}{11.1}\rceil+1$ (Thm.~\ref{thm:novereleven}) & $ \lceil \frac{n}{5} \rceil + 1 $ (Thm.~\ref{thm:upper}) \\
       \hline
       \multicolumn{1}{|c|}{$\overline{\Pi}^{bichrom}(n)$} & $ \lfloor \frac{n}{6.8}\rfloor $~\cite{claverol2021circles} & $\lceil \frac{n}{4} \rceil + 1 $~\cite{claverol2021circles} & $\lceil\frac{n}{11.1}\rceil+1$ (Thm.~\ref{thm:novereleven}) & $ \lceil \frac{n}{4} \rceil + 1 $ \\
       \hline
       \multicolumn{1}{|c|}{$\Pi^{bichrom-in-out}(n)$} & $\lceil\frac{n}{27.1}\rceil+1$ (Thm.~\ref{thm:bichrom-inout}) & $\lceil \frac{n}{4} \rceil + 1$ & $\lceil \frac{n}{27.1}\rceil+1$ (Thm.~\ref{thm:bichrom-inout}) & $\lceil \frac{n}{4} \rceil + 1$ \\
       \hline
    \end{tabular}
    \caption{Summary of our results, comparing the Euclidean and geodesic settings. Note that non-integer values in the table are approximations, and refer to the corresponding reference for the exact value.}
    \label{tab:summary}
\end{table}
}

We summarize all our results and compare them with the Euclidean setting in Table~\ref{tab:summary}. In Section~\ref{sec:comparison}, we present an overview of key properties of shortest paths in $P$ (which we refer to as the {\em geodesic setting}), emphasizing both their similarities to and differences from the Euclidean case. It is these differences that pose challenges we address to extend the results to the geodesic setting. Using these properties, in Section~\ref{sec:every}, we extend the lower bounds from the Euclidean case~\cite{hayward1989note,neumann1988combinatorial} to the geodesic setting. While the upper bounds carry over directly, the lower bounds require new arguments due to the differences outlined in Section~\ref{sec:comparison}. For instance, unlike the Euclidean metric, which has bounded doubling dimension, the doubling dimension of the geodesic metric is not necessarily bounded since it depends on the number of reflex vertices of $P$. Additionally, we establish a tight bound for the case where the point set is in geodesically convex position. In Section~\ref{sec:diametral}, we show that the techniques in~\cite{akiyama1996circles}, where only diametral disks are considered, can be generalized to the geodesic setting. Finally, in Section~\ref{sec:bichromatic} we extend the results in~\cite{prodromou2007combinatorial,urrutiaproblemas} for two-colored points. Although some of the bounds are not optimal, we deliberately include them because they illustrate how classical proof techniques from the Euclidean setting can be adapted to the geodesic framework. These intermediate results clarify which parts of the original arguments remain valid, which require modification, and where genuine geometric difficulties arise. This makes the structural differences between the two settings more apparent and may facilitate further extensions of these techniques.

The problem of studying the largest number of points contained in any disk through two points inside a polygon is distinct from the planar case. As illustrated in Figure~\ref{fig:definitions}, a geodesic disk of radius~$ r $ in a polygon~$ P $ can be strictly contained in a Euclidean disk of radius~$ r $. As such, in this setting, it is unclear whether there always exists a pair of points such that any geodesic disk through them can contain the same number of points as in the Euclidean setting.

\section{Preliminaries and notation} \label{sec:preliminaries} 
A sequence of
points in the plane, $p_1, \ldots, p_n$, forms a {\em polygonal chain} where the
edges of the chain are the segments ${p_ip_{i+1}}$, for $i\in\{1, \ldots, n-1\}$.
A polygonal chain is {\em simple} if consecutive edges intersect at their common
point and non-consecutive segments do not intersect. In a graph theoretic sense, this is a path. A simple polygon is a simple closed polygonal chain, i.e., a cycle. Informally, a polygon $P$
is a {\em weakly-simple} polygon provided that a slight perturbation of the vertices of the polygon results in a simple polygon. See Akitaya et al. \cite{Ak17} for a
formal definition of weakly-simple as well as an algorithm to quickly recognize
weakly-simple polygons. Given a pair of points $ u, v $ in a polygon $ P $, the
\emph{geodesic} (or \emph{shortest}) \emph{path} from $ u $ to $ v$ in $P$ is
denoted by $ g(u,v) $. It is well-known that $g(u,v)$ is a polygonal chain. The
length of this path is the sum of the lengths of its edges and is denoted by $
\lvert g(u,v)\rvert $. 

A subset $P'$ of $P$ is called \emph{geodesically convex} if, for all pairs of points $u,v \in P'$, the geodesic path $g(u,v)$ is contained in $P'$. The \emph{geodesic convex hull} of a set $S$ of points in~$P$ is the intersection of all geodesically convex sets that contain $S$. We say that $S$ is in \emph{geodesically convex position} if every point of $S$ is on the boundary of the geodesic convex hull of $S$. A \emph{geodesic triangle} on three points $ a,b,c \in P $, denoted $ \triangle (a,b,c) $, is a weakly-simple polygon whose boundary consists of $ g(a,b) $, $ g(b,c) $, and $ g(c,a) $. The geodesic paths $ g(a,b) $ and $ g(a,c) $ follow a common route from $a$ until they diverge at a point $a'$ (which may be $a$ itself). Similarly, let $b'$ be the point where $g(b,a) $ and $ g(b,c) $ diverge, and $ c' $ be the point where $ g(c,a) $ and $ g(c,b) $ diverge. The geodesic triangle $ \triangle (a',b',c') $ is a simple polygon that has $ a', b' $ and $ c' $ as its convex vertices. The geodesic triangle $ \triangle (a',b',c') $ is referred to as the \emph{geodesic core} of $ \triangle (a,b,c) $ with the vertices $a', b', c'$ referred to as the {\em core} vertices corresponding to the vertices $a, b, c$, respectively. The geodesic core of $\triangle(a,b,c)$ is denoted as $ \triangledown (a,b,c) $~\cite{bose2021piercing, pollack1989computing}. In addition, $ \angle abc $ denotes the convex angle formed by the two edges of the geodesic core adjacent to the core vertex $ b' $ in $ \triangledown (a,b,c) $. See Figure~\ref{fig:geo}.

A \emph{geodesic disk} centered at $ c \in P $ with radius $ r \geq 0 $ is the set $ D(c,r) = \{x \in P \mid \lvert g(c,x)\rvert \leq r \} $.
In the plane, there always exists a disk through three given points, assuming they are not collinear. When we refer to a disk {\em through} a set of points, we mean that the points are on the boundary of the disk. In a simple polygon, it is no longer true that there always exists a geodesic disk through three given points. This essentially means that given three points in a simple polygon, there does not always exists a point in the polygon that is geodesically equidistant to all three. However, if there is a geodesic disk through three points in a polygon, then it is unique~\cite{aronov1993furthest}. Given two points $u$ and $v$, there always exists a geodesic disk through $u$ and $v$ centered at $ c_{uv} $, the midpoint of $g(u,v)$, called the \emph{diametral geodesic disk}.

A set $ S' $ of at least three points in $ P $ is \emph{geodesically collinear} if $ \exists x, y \in S' $ such that $ S' \subset g(x,y) $, see set $ \{x,c,y\} $ in Figure~\ref{fig:definitions}. Similarly, a set $ S'' $ of at least four points is \emph{geodesically cocircular} if there exists a geodesic disk with $S''$ on its boundary, see set $ \{w,x,y,z\} $ in Figure~\ref{fig:definitions}. A set $ S $ of $ n $ points in a polygon $ P $ is said to be in \emph{general position} if no three points in $ S $ are geodesically collinear and no four points among $ S $ and the vertices of $ P $ are geodesically cocircular. In this paper we only consider sets $S$ in general position in polygons~$P$ with the extra conditions that there are no two points $x,y\in S$ and a reflex vertex $p$ on the boundary of~$P$ such that $y\subset g(x,p)$, and a second condition related to the bisector of a geodesic path that will be explained later in the section. The first assumption is made to simplify the definitions and ensure the correctness of certain constructions.

\begin{figure}[tb]
	\centering
	\includegraphics{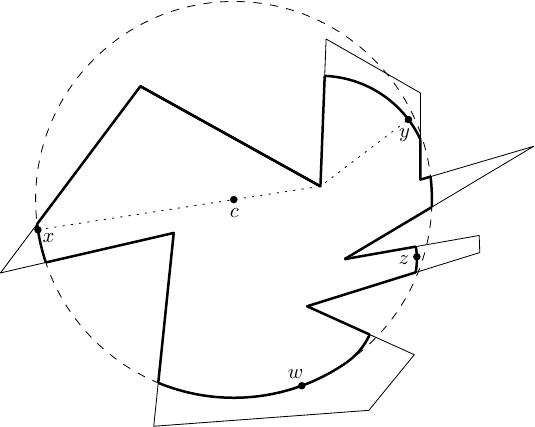}
	\caption{Geodesic disk centered at $ c $, with radius $ \lvert g(c,z)\rvert $, shown solid, and the equivalent Euclidean disk, shown dashed.}
	\label{fig:definitions}
\end{figure}

Given two points $ u $ and $ v $, the \emph{extension path} $ \ell(u,v) $ is defined as follows: extend the first and last segment of $ g(u, v) $ until they intersect with $ P $. Denote these extension points on the boundary as $\hat{u}$ and  $\hat{v}$, respectively. See the red path in Figure~\ref{fig:geo}. Note that if $ u $ (or $ v $) is on the boundary of $ P $ then $\hat{u} = u$ (or $\hat{v} = v$). Also note that, because of the extra condition added to the general position, the extensions cannot touch a reflex vertex $p$ on the boundary of the polygon $P$. The extension path $\ell(u,v)$ naturally partitions the polygon into two weakly-simple polygons. We say that a point $ w $ is to the left (resp., right) of $ g(u,v) $ if $ w $ belongs to the weakly-simple polygon to the left (resp., right) of $ \ell(u,v) $, when considering $ \ell(u,v) $ oriented from $ u $ to $ v $. In addition, we define the left (resp., right) half-polygon of the path $ g(u,v) $ as the set of points that are to the left (resp., right) of its extension path $ \ell(u,v) $ and denote it as $P^+(u,v)$ (resp., $P^-(u,v)).$

Given two points $ u, v \in S $, the bisector $ b(u,v) $ is the set of all points that are equidistant to $ u $ and $v$. In this paper, as in~\cite{aronov1987geodesic}, we assume that no bisector $ b(u,v) $ passes through a vertex of~$ P $. Hence, $ b(u,v) $ is a continuous curve connecting two points on the boundary of~$ P $. Moreover, $ b(u,v) $ can be decomposed into $ O(\lvert P\rvert)$ pieces~\cite{aronov1987geodesic}, each of which is a subarc of a hyperbola (that could degenerate to a line segment).

\begin{figure}[tb]
	\centering
	\includegraphics{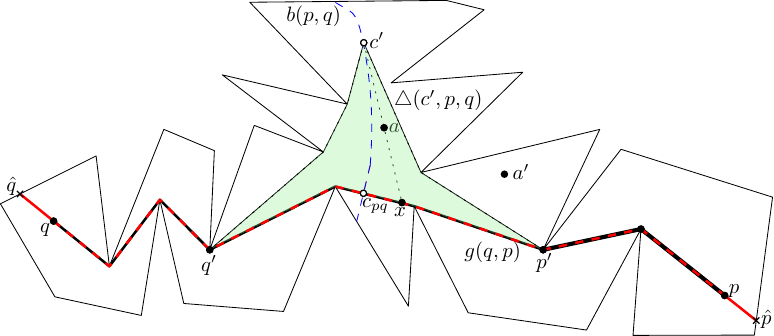}
	\caption{The point $ a $ belongs to the interior of $ \triangle (c',p,q) $.}
	\label{fig:geo}
\end{figure}

\begin{definition} \label{dfn:variants}
    \begin{itemize}
        \item $ \Pi(n) $ (resp., $ \overline{\Pi}(n) $) is the largest integer such that for every polygon $P$ and every set (resp., geodesically convex set) $ S $ of $n$ points in $P$, there exist points $ u, v \in S $ such that every geodesic disk through them contains $ \Pi(n) $ (resp., $\overline{\Pi}(n)$) points of~$ S$.
        
        \item $ \Pi^{in\mbox{-}out}(n) $ is the largest integer such that for every polygon $P$ and every set $ S $ of $n$ points in $P$, there exist points $ u, v \in S $ such that every geodesic disk through them has $ \Pi^{in\mbox{-}out}(n) $ points of $ S $, both inside and outside.
        \item $ \Pi^{diam}(n) $ (resp., $ \overline{\Pi}^{diam}(n) $) is the largest integer such that for every polygon $P$ and every set (resp., geodesically convex set) $ S $ of $n$ points in $P$, there exist points $ u, v \in S $ such that $ D\left(c_{uv},\frac{\lvert g(u,v)\rvert}{2}\right) $ contains $ \Pi^{diam}(n) $ (resp., $\overline{\Pi}^{diam}(n)$) points of $ S$.
        \item $ \Pi^{bichrom}(n) $ (resp., $ \overline{\Pi}^{bichrom}(n) $) is the largest integer such that for every polygon $P$ and every set (resp., geodesically convex set) $S$ of $n/2$ red points and $n/2$ blue points, there exists a red point $r$ and a blue point $b$ such that every geodesic disk through them contains $ \Pi^{bichrom}(n) $ (resp., $ \overline{\Pi}^{bichrom}(n) $) points of $ S $.
        \item $\Pi^{bichrom\mbox{-}in\mbox{-}out}(n)$ is the largest integer such that for every polygon $P$ and every set (resp., geodesically convex set) $S$ of $n/2$ red points and $n/2$ blue points, there exists a red point $r$ and a blue point $b$ such that every geodesic disk through them has $\Pi^{bichrom\mbox{-}in\mbox{-}out}(n)$ points of $ S $, both inside and outside.
    \end{itemize}
\end{definition}

\section{Euclidean and geodesic disks: some similarities and differences}
\label{sec:comparison}

In this section we describe some of the similarities and differences concerning Euclidean and geodesic disks that arise when proving the main results in this paper.

\subsection{Some similarities}

Two bounded curves $ g_1 $ and $ g_2 $ \emph{intersect} if by traversing $ g_1 $ from one of its endpoints to the other endpoint, it crosses $ g_2 $ and switches from one side of $ g_2 $ to the other side~\cite{biniaz2016plane,toussaint1989computing}. We say that $ g_1 $ and $ g_2 $ are \emph{non-intersecting} if they do not intersect. Two non-intersecting curves can share an endpoint or can ``touch'' each other. Note that if $ g_1 $ and $ g_2 $ are geodesics, they can intersect at most once~\cite{biniaz2016plane}, i.e., their intersection consists of at most one subpath of both geodesics.

It is well known that given two segments $ uv $ and $ pq $ that intersect transversely in the plane, any disk through $ u $ and $ v $ contains at least one endpoint of $ pq $, or any disk through $p$ and $q$ contains at least one endpoint of $ uv $, see, e.g., \cite{neumann1988combinatorial}. This result translates to the geodesic setting as shown below. We begin with a property of geodesic disks, refer to Figure~\ref{fig:left-poly}.

\begin{figure}[tb]
	\centering
	\includegraphics{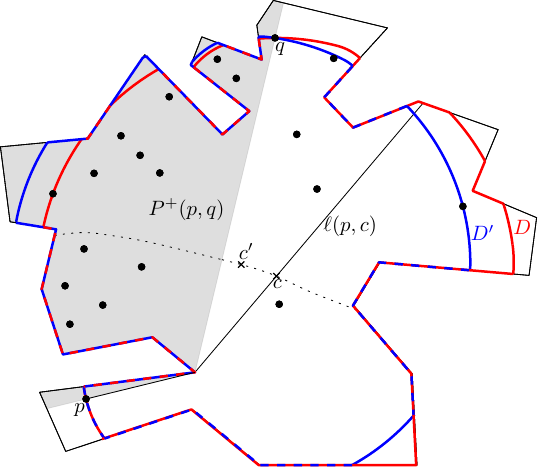}
	\caption{The intersection of the disk through $ p $ and $ q $ with center at $ c $ (in red) and $ P^+(p,q) $ (in gray) is contained in the disk through $ p $ and $ q $ with center at $ c' $ (in blue).}
	\label{fig:left-poly}
\end{figure}

\begin{lemma} \label{lem:diskcontain}
Let $ p, q$ be two points in $ P $ and $ D, D' $ be two geodesic disks through $ p $ and $ q $, respectively with centers $c$ and $c'$. If $c'$ is left of $ \ell(p,c) $ then $D\cap P^+(p,q)\subset D'$.    
\end{lemma}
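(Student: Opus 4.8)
The plan is to transpose the classical planar argument to the geodesic setting. In the plane, two distinct circles through $p$ and $q$ meet only at $p$ and $q$; hence on each side of the line through $p$ and $q$ one disk is nested inside the other, and the positions of the two centers decide which. Here $\ell(p,q)$ plays the role of the line through $p$ and $q$, and the hypothesis ``$c'$ is left of $\ell(p,c)$'' is what fixes the direction of the nesting.

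First I would record the structure. We may assume $D\neq D'$. Since at most one geodesic disk passes through three given points~\cite{aronov1993furthest}, the boundaries $\partial D$ and $\partial D'$, each a simple closed curve through $p$ and $q$, satisfy $\partial D\cap\partial D'=\{p,q\}$. Geodesic disks in a simple polygon are geodesically convex, so $g(p,q)$ lies in $D$ and in $D'$ with its relative interior in the respective interiors; and, by the general-position hypotheses, the two extension segments $\hat{p}p$ and $q\hat{q}$ of $\ell(p,q)$ leave $D$ (and $D'$) immediately---at $p$ (resp.\ $q$) these segments head away from the convex region $D$, hence stay outside it. Consequently $\partial D$ meets $\ell(p,q)$ only at $\{p,q\}$, so $p$ and $q$ split $\partial D$ into an arc $\alpha\subseteq\overline{P^+(p,q)}$ and an arc inside $\overline{P^-(p,q)}$, and likewise $\partial D'$ has a left arc $\beta\subseteq\overline{P^+(p,q)}$. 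In particular $D\cap\overline{P^+(p,q)}$ is exactly the region bounded by $\alpha$ together with $g(p,q)$, and the analogue holds for $D'$ and $\beta$.

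It now suffices to prove $\alpha\subseteq D'$. Indeed, $\alpha$ and $\beta$ meet only at $p$ and $q$, hence do not cross, so $\alpha\setminus\{p,q\}$ lies either entirely in the interior of $D'$ or entirely outside $D'$; in the former case the region bounded by $\alpha$ and $g(p,q)$---which contains $D\cap P^+(p,q)$---has its whole boundary in $D'$ (recall $g(p,q)\subseteq D'$) and therefore lies in $D'$, which is the claim. So everything reduces to deciding which of $\alpha$ and $\beta$ is the inner arc, equivalently to producing a single point of $\alpha\setminus\{p,q\}$ inside $D'$.

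For this I would deform the disk along the bisector. Both $c$ and $c'$ lie on $b(p,q)$, so let $c_t$ traverse the sub-arc of $b(p,q)$ from $c$ to $c'$, let $D_t$ be the geodesic disk with center $c_t$ and radius $|g(c_t,p)|=|g(c_t,q)|$, and let $\alpha_t$ be its left arc. As above, $\partial D_t\cap\partial D=\{p,q\}$ for every $t$, so $\alpha_t$ never crosses $\alpha_0=\alpha$; hence the region bounded by $\alpha_t$ and $g(p,q)$ varies monotonically in $t$ (always growing or always shrinking), and it remains only to read off the sign of the infinitesimal motion at $t=0$. A first-order computation shows that, for $z\in\partial D_0$, displacing the center in a direction $v$ keeps $z$ inside the disk through $p$ and $q$ exactly when $\langle v,\,u_z-u_p\rangle>0$, where $u_p$ and $u_z$ are the unit initial directions at $c$ of $g(c,p)$ and $g(c,z)$; and since the tangent direction $v$ of $b(p,q)$ at $c$ is perpendicular to $u_p-u_q$, this inequality holds simultaneously for all $z\in\alpha$ exactly when the ray from $c$ in direction $v$ enters the left arc $\alpha$. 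Thus the proof hinges on the purely geometric fact that, when $c'$ is left of $\ell(p,c)$, the direction from $c$ along $b(p,q)$ toward $c'$ does point into $\alpha$. I expect this to be the genuine obstacle: one has to compare the tangent of $b(p,q)$ at $c$ (perpendicular to $u_p-u_q$), the tangent of $\ell(p,c)$ at $c$ (along $u_p$), and the side of $g(p,q)$ that carries $\alpha$, all while invoking general position to rule out tangential intersections, a breakpoint of $b(p,q)$ located exactly at $c$, and the degenerate case in which $\partial D$ or $\partial D'$ runs along $\partial P$.
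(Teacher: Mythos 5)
There is a genuine gap, and you have flagged it yourself: every step that actually uses the hypothesis ``$c'$ is left of $\ell(p,c)$'' is deferred to the unproven claim that the tangent direction of $b(p,q)$ at $c$ pointing from $c$ toward $c'$ ``enters the left arc $\alpha$''. That directional fact is not a routine verification to be filled in later; it \emph{is} the content of the lemma (it is exactly what decides which of the two disks is nested inside the other on the left side), so a proof that reduces the statement to it and stops has not proved the statement. Moreover, the surrounding machinery you build to perform this reduction is itself shakier in the geodesic setting than you acknowledge: the monotonicity of the family of left regions of $D_t$ along the bisector is asserted, not argued; the equivalence ``$\langle v,u_z-u_p\rangle>0$ for all $z\in\alpha$ iff the ray from $c$ in direction $v$ enters $\alpha$'' needs a monotonicity statement about initial directions of geodesics from $c$ to points of $\alpha$ that you do not supply; the first-order computation presumes differentiability of $c\mapsto|g(c,z)|$ and of $b(p,q)$ at $c$, while $b(p,q)$ is only piecewise hyperbolic; and the case in which $\partial D$ or $\partial D'$ runs along $\partial P$ (which is not a measure-zero degeneracy for geodesic disks, and is not excluded by the paper's general position assumptions, since $c$ and $c'$ need not be points of $S$) is set aside, even though it breaks the ``$\partial D\cap\partial D'=\{p,q\}$'' and Jordan-arc arguments your reduction rests on.

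For comparison, the paper's proof avoids all of this global/deformation machinery: it takes an arbitrary point $a\in D\cap P^+(p,q)$ and argues directly that $|g(c',a)|\le|g(c',p)|$, splitting into the case $a\in\triangle(c',p,q)$, handled by the convexity of $x\mapsto|g(c',x)|$ along $g(p,q)$ (Pollack et al.~\cite{pollack1989computing}), and the case $a\notin\triangle(c',p,q)$, handled by observing that $g(c',p)$ (or $g(c',q)$) must cross $g(c,a)$ and then applying the triangle inequality at the crossing point. If you want to salvage your outline, the missing directional claim is where you would have to do comparable work; the paper's pointwise argument is both shorter and immune to the smoothness and boundary-contact issues listed above.
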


\begin{proof}
    Let $a$ be an arbitrary point in $D\cap P^+(p,q)$. By definition of $P^+(p,q)$, we have that $a$ is to the left of $\ell(p,q)$. We consider two cases: $a\in \triangle(c',p,q)$ and $a \not \in \triangle(c',p,q)$. 
    
    If $ a $ is in $ \triangle (c',p,q) $, see Figure~\ref{fig:geo}, then we use the fact that, for every point $ x \in g(p,q) $, the length of $ g(c', x) $ as $x$ moves from $p$ to $q$ is a convex function with the maximum occurring at $p$ or $q$~\cite{pollack1989computing}. In particular, we have that $ \lvert g(c',x) \rvert \leq \lvert g(c',q) \rvert =  \lvert g(c',p) \rvert $, for $ x \in g(p,q) $. If we move $ x $ from $ p $ to $ q $ along $ g(p,q) $, at some point $ g(c',x) $ will go through the point $a$, so $ \lvert g(c',a) \rvert \leq \lvert g(c',x) \rvert $. Hence, $ \lvert g(c',a) \rvert \leq \lvert g(c',q) \rvert = \lvert g(c',p) \rvert $, which means that $a \in D'$.
    
    Now consider the case where the point $ a $ is outside $ \triangle (c',p,q) $, like $ a' $ in Figure~\ref{fig:geo}. Without loss of generality, assume $ a $ is to the right of $ g(p,c')$. (Otherwise, we can consider $ a $ and $ q $, and the proof is analogous). In this case, $g(c',p)$ must intersect $g(c,a)$. Let $x$ be a point in this intersection. By definition, we have $ \lvert g(c',x)\rvert + \lvert g(x,p)\rvert = \lvert g(c',p)\rvert$. Since $a \in D$, we have that $\lvert g(c,a)\rvert = \lvert g(c,x) \rvert + \lvert g(x,a) \rvert \le \lvert g(c,p) \rvert $ and by the triangle inequality, we have $\lvert g(c,p) \rvert \le \lvert g(c,x) \rvert + \lvert g(x,p) \rvert $. This means that $\lvert g(x,a) \rvert \le \lvert g(x,p) \rvert $. Therefore, $ \lvert g(c',a) \rvert \leq \lvert g(c',x) \rvert + \lvert g(x,a) \rvert \le \lvert g(c',x) \rvert + \lvert g(x,p) \rvert = \lvert g(c',p) \rvert $, implying $a \in D'$ in this case.
\end{proof}

With this property of geodesic disks in hand, we are now ready to prove the corresponding property of intersecting segments in the geodesic setting.

\begin{lemma}
    \label{lem:quadrilateral}
    Let $u, v, p, q $ be four distinct points in general position in $P$ such that $ g(p,q) $ and $ g(u,v) $ intersect. Then every geodesic disk with $ p $ and $ q $ on its boundary contains at least one endpoint of $ g(u,v) $, or every geodesic disk with $ u, v $ on its boundary contains at least one endpoint of $ g(p,q) $.
\end{lemma}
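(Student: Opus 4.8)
The plan is to mimic the classical Euclidean argument, which proceeds by contradiction: suppose there is a geodesic disk $D$ through $p,q$ missing both endpoints of $g(a,b)$, and a geodesic disk $D'$ through $a,b$ missing both endpoints of $g(p,q)$. Since $g(p,q)$ and $g(a,b)$ intersect (they cross exactly once, by the fact cited from~\cite{biniaz2016plane}), the crossing point, call it $z$, lies on both geodesics. I would first record the elementary consequence that, because $z\in g(a,b)$, the function $x\mapsto|g(c',x)|$ for $c'$ the center of $D'$ is convex along $g(a,b)$ with maximum at $a$ or $b$ — exactly the Pollack–Rote–Roughgarden property already used in the proof of \lemref{diskcontain} — and symmetrically for the center $c$ of $D$ along $g(p,q)$. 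Hence $|g(c',z)|\le\max\{|g(c',a)|,|g(c',b)|\}=r'$ (the radius of $D'$) and $|g(c,z)|\le r$ (the radius of $D$), so $z\in D\cap D'$.

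**The core argument.**

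Now I want to derive a contradiction from the two assumed disks. The extension path $\ell(p,q)$ splits $P$ into $P^+(p,q)$ and $P^-(p,q)$, and because $g(a,b)$ crosses $g(p,q)$, one endpoint of $g(a,b)$ — say $a$ — lies strictly in $P^+(p,q)$ and the other, $b$, strictly in $P^-(p,q)$ (this uses that the extension cannot pass through a reflex vertex, part of the general-position hypothesis). Symmetrically, $p$ and $q$ lie on opposite sides of $\ell(a,b)$. The key is to relate the centers $c$ and $c'$ to these half-polygons and invoke \lemref{diskcontain}. Consider the center $c'$ of $D'$: it lies either in $P^+(p,q)$ or in $P^-(p,q)$; say $c'\in P^+(p,q)$. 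Using \lemref{diskcontain} with the roles ``$p,q$'' and the two disks being $D$ (center $c$) and $D'$ (center $c'$) — after checking the left/right orientation so that $c'$ is on the correct side of $\ell(p,c)$ — we get $D\cap P^+(p,q)\subset D'$. But $a\in P^+(p,q)$ and $a\in D$ would then force $a\in D'$, contradicting that $D'$ misses both endpoints of $g(a,b)$. So we must instead have $a\notin D$; but that is consistent with the hypothesis on $D$, so this alone is not yet a contradiction — I need to run the symmetric argument on the other side simultaneously.

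**Closing the loop.**

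The clean way is: since $z\in D$, and $z$ lies on $g(p,q)$, the point $z$ is on the boundary curve $\ell(p,q)$ or arbitrarily close to both half-polygons; more usefully, I will instead argue with $a$ and $b$ directly. We have $z\in D\cap D'$. Suppose $c'\in \overline{P^+(p,q)}$. Applying \lemref{diskcontain} (with appropriate orientation bookkeeping) to the pair $p,q$ and disks $D,D'$ gives $D\cap P^+(p,q)\subseteq D'$. Now I claim $D$ must contain a point of $P^+(p,q)$ that is ``beyond'' $a$ along $g(a,z)$: indeed $z\in D$ and $z$ is the limit of points of $g(a,z)\subset$ the closed left half-polygon, and since $|g(c,\cdot)|$ is convex along $g(a,z)$ with the value at $z$ being $\le r$, while $a\notin D$ means the value at $a$ exceeds $r$, the segment $g(a,z)$ enters $D$; its entry point $w$ lies in $P^+(p,q)$ (as it is strictly between $a$ and $z$, both in the closure, with $w\ne z$). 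Then $w\in D\cap P^+(p,q)\subseteq D'$, so $|g(c',w)|\le r'$. But $w\in g(a,z)\subset g(a,b)$, and by convexity of $|g(c',\cdot)|$ along $g(a,b)$ with maximum at the endpoints, together with $|g(c',a)|>r'$ (since $D'$ misses $a$), we get that the portion of $g(a,b)$ near $a$ lies outside $D'$ — and $w$ being on the $a$-side past the last exit point would have to satisfy $|g(c',w)|\le|g(c',z)|\le r'$, which is fine, no contradiction yet. The honest resolution, and the step I expect to be the main obstacle, is handling the orientation hypothesis of \lemref{diskcontain} carefully: one of the two cases ($c'$ left of $\ell(p,c)$ or right) will yield $D\cap P^+(p,q)\subseteq D'$ and, by the symmetric statement with $p,q$ swapped, $D\cap P^-(p,q)\subseteq D'$ in the other, so that in either case $D\cap\{a,b\}\subseteq D'$; but $a$ and $b$ are on opposite sides, so whichever side $c'$ falls on, the corresponding endpoint ($a$ or $b$) of $g(a,b)$ that lies in $D$ is forced into $D'$ — and then running the fully symmetric argument with the roles of $(p,q)$ and $(a,b)$ exchanged, using that $p,q$ lie on opposite sides of $\ell(a,b)$, pins down that $D'$ must contain an endpoint of $g(p,q)$, contradicting the choice of $D'$. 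I would organize this as: (1) crossing point $z\in D\cap D'$; (2) endpoints of each geodesic split across the other's extension path; (3) a case analysis on which side of $\ell(p,c)$ the center $c'$ lies, applying \lemref{diskcontain} or its mirror image; (4) conclude one endpoint of $g(a,b)$ is in $D'$ or one endpoint of $g(p,q)$ is in $D$, which is exactly the claim (negating the assumption that both disks miss both relevant endpoints). The delicate point throughout is the orientation convention in \lemref{diskcontain} and making sure the ``left half-polygon'' of $g(p,q)$ is the one actually containing the relevant endpoint of $g(a,b)$ in each branch.
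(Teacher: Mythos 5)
Your proposal has a genuine gap at its core step: you invoke \lemref{diskcontain} for the pair $D$ (through $p,q$) and $D'$ (through $a,b$), but that lemma only compares two geodesic disks whose centers both lie on the bisector of the \emph{same} pair of points --- in its proof the radius of $D'$ is $|g(c',p)|=|g(c',q)|$, i.e.\ $D'$ must itself pass through $p$ and $q$. Your $D'$ passes through $a$ and $b$; its center is in general not equidistant from $p$ and $q$, so the asserted containment $D\cap P^+(p,q)\subset D'$ has no support. Moreover, even granting such containments, your closing step does not yield a contradiction: under the hypothesis you are contradicting, $D$ contains \emph{neither} $a$ nor $b$, so the phrase ``the corresponding endpoint of $g(a,b)$ that lies in $D$ is forced into $D'$'' is vacuous, and your own text concedes ``no contradiction yet'' without ever closing the argument. (The preliminary observation that the crossing point $z$ lies in $D\cap D'$ is correct, but it is never used decisively.)

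The paper's proof is structured around the only way \lemref{diskcontain} can be applied: fix one pair, say $a,b$, and slide the center of a disk through $a,b$ along $b(a,b)$. The key case distinction --- absent from your sketch and genuinely needed in the geodesic setting, where bisectors are bounded curves --- is whether $b(a,b)$ and $b(p,q)$ intersect. If they meet at $x$, general position gives (say) $|g(x,p)|<|g(x,a)|$, so the disk through $a,b$ centered at $x$ contains $p$ and $q$; since $p,q$ lie on opposite sides of $\ell(a,b)$, \lemref{diskcontain} shows that moving the center along $b(a,b)$ in one direction keeps $p$ inside and in the other keeps $q$ inside, hence every disk through $a,b$ contains an endpoint of $g(p,q)$. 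If the bisectors do not meet, the paper takes the endpoint $x$ of $b(a,b)$ on the polygon boundary on $q$'s side and the endpoint $y$ of $b(p,q)$ on $a$'s side, and a triangle-inequality argument comparing $|g(x,p)|$ with $|g(x,b)|$ (and, if needed, $|g(y,b)|$ with $|g(y,p)|$) shows that either every disk through $a,b$ contains $p$ or every disk through $p,q$ contains $b$. To salvage your route you would need a tool comparing disks centered on \emph{different} bisectors, which neither the paper nor your proposal provides.
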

\begin{proof}
    First, suppose that the bisectors $ b(u,v) $ and $ b(p,q) $ intersect, see Figure~\ref{fig:intersection}, and let $ x $ be a point in their intersection. Then we have that $ \lvert g(x,u)\rvert = \lvert g(x,v)\rvert $ and $ \lvert g(x,p)\rvert = \lvert g(x,q)\rvert $. We can assume, without loss of generality, that $ \lvert g(x,p)\rvert < \lvert g(x,u)\rvert $ since the points in $ S $ are in general position, and $ u,v,p,q $ are not co-circular. The case where $ \lvert g(x,p)\rvert > \lvert g(x,u)\rvert $ is symmetric. 
    
    When $ \lvert g(x,p)\rvert < \lvert g(x,u)\rvert $, the geodesic disk through $ u $ and $ v $ with center $x$ contains $p$ and~$q$. Since $ g(u,v) $ and $ g(p,q) $ intersect, $ p $ and $ q $ are on different sides of the extension path $ \ell(u,v) $. Hence, by Lemma~\ref{lem:diskcontain}, if we move the center of this disk along the bisector $ b(u,v) $ in one direction while keeping $u,v$ on its boundary, the geodesic disk will always contain $ p $ in its interior, and if we move the center in the other direction, the geodesic disk will always contain $ q $.

    Now, suppose that the bisectors $ b(u,v) $ and $ b(p,q) $ do not intersect, see Figure~\ref{fig:non-intersection}. Hence, all points of the bisector $ b(u,v) $ are closer to one endpoint of $ g(p,q) $ than to the other one. Assume, without loss of generality, that this point is $ p $, i.e., $ \lvert g(x',p)\rvert \leq \lvert g(x',q)\rvert, \ \forall x'\in b(u,v) $. Similarly, all points of the bisector $ b(p,q) $ are closer to one endpoint of $ g(u,v) $ than to the other one. Assume, without loss of generality, that this point is $ v $.

    \begin{figure}[tb]
		\captionsetup[sub]{justification=centering}
		\centering
		    \begin{subfigure}[b]{0.47\textwidth}
				\centering
	    	    \includegraphics{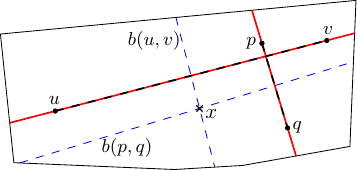}
	    	    \caption{Bisectors $ b(u,v) $ and $ b(p,q) $ intersect.}
	    	    \label{fig:intersection}
	    	\end{subfigure}
                \qquad
		    \begin{subfigure}[b]{0.47\textwidth}
   			    \centering
	    	    \includegraphics{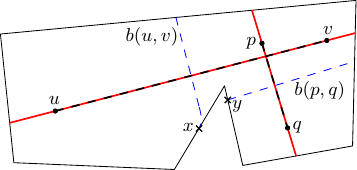}
	    	    \caption{Bisectors $ b(u,v) $ and $ b(p,q) $ do not intersect.}
	    	    \label{fig:non-intersection}
    	    \end{subfigure}
    	    \caption{Any geodesic disk through $ u $ and $ v $ contains at least one endpoint of $ g(p,q)$.}
        \end{figure}

    The bisector $ b(u,v) $ intersects the boundary of the polygon. Also, the extension path $ \ell(u,v) $ divides the polygon $P$ into two half-polygons, one containing $p$ and the other containing $q$. Let $ x $ be the intersection of $ b(u,v) $ with the polygon boundary that is contained in the same half-polygon as $ q $.  Similarly,  the extension path $ \ell(p,q) $ divides the polygon into two half-polygons, and let $ y $ be the intersection of $ b(p,q) $ with the polygon contained in the same half-polygon as $ u $.

    By definition, we have that $ \lvert g(x,u)\rvert = \lvert g(x,v)\rvert $, and $ \lvert g(x,p)\rvert < \lvert g(x,q)\rvert $. If $ \lvert g(x,p)\rvert \leq \lvert g(x,v)\rvert $, the geodesic disk with center at $ x $ and radius $ \lvert g(x,v)\rvert $ contains $ u, v $ and $p$. By Lemma \ref{lem:diskcontain}, this means that every geodesic disk through $u, v$ contains $p$. 

    The only remaining case to consider is when $\lvert g(x,p)\rvert > \lvert g(x,v)\rvert $. Since the bisectors do not intersect, we have that $g(x,v)$ intersects $g(y,p)$. Let $z$ be a point in this intersection. By definition, we have $\lvert g(x,v)\rvert = \lvert g(x,z)\rvert + \lvert g(z,v)\rvert$ and $\lvert g(y,p)\rvert = \lvert g(y,z)\rvert + \lvert g(z,p)\rvert$. By the triangle inequality, we have $\lvert g(x,p)\rvert \leq \lvert g(x,z)\rvert + \lvert g(z,p)\rvert$, which with our assumption that $\lvert g(x,p)\rvert > \lvert g(x,v)\rvert $ implies that $\lvert g(z,v)\rvert < \lvert g(z,p)\rvert$. However, this means that $\lvert g(y,v)\rvert \leq \lvert g(y,z)\rvert + \lvert g(z,v)\rvert < \lvert g(y,z)\rvert + \lvert g(z,p)\rvert = \lvert g(y,p)\rvert$. Therefore, when $\lvert g(x,p)\rvert > \lvert g(x,v)\rvert $, we have $\lvert g(y,v)\rvert < \lvert g(y,p)\rvert$. By Lemma \ref{lem:diskcontain}, this means that every geodesic disk through $p, q$ contains $v$.  
    \end{proof}

We now prove a complementary lemma about points that always remain outside disks through endpoints of geodesic paths that intersect. The proof is similar to that of Lemma~\ref{lem:quadrilateral}, and we include it here for completeness.
\begin{lemma}
    \label{lem:quadrilateral2}
    Let $ u,v,p,q $ be four distinct points in general position in $ P $ such that $ g(p,q) $ and $ g(u,v) $ intersect. Then every geodesic disk with $ p $ and $ q $ on its boundary does {\em not} contain at least one endpoint of $g(u,v)$, or every geodesic disk with $ u, v $ on its boundary does {\em not} contain at least one endpoint of $ g(p,q)$. \end{lemma}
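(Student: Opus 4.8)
The plan is to mirror the structure of the proof of Lemma~\ref{lem:quadrilateral}, replacing "some disk contains the point" arguments with "some disk excludes the point" arguments, using the contrapositive direction of Lemma~\ref{lem:diskcontain}. The key observation is that Lemma~\ref{lem:diskcontain} says: if the center $c'$ of a disk $D'$ through $p,q$ lies left of $\ell(p,c)$, then $D \cap P^+(p,q) \subset D'$; equivalently, a point of $P^+(p,q)$ that is outside $D'$ is also outside $D$. So "staying outside" propagates in the direction opposite to the direction in which "staying inside" propagates. Concretely, if $a \in P^+(p,q)$ and some geodesic disk through $p,q$ excludes $a$, then sliding the center along $b(p,q)$ toward the $P^+$ side keeps $a$ outside every such disk.

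First I would split into the two cases according to whether the bisectors $b(a,b)$ and $b(p,q)$ intersect. If they intersect at a point $x$, then (as in Lemma~\ref{lem:quadrilateral}) $|g(x,a)|=|g(x,b)|$ and $|g(x,p)|=|g(x,q)|$, and by general position one of these common distances is strictly smaller; say $|g(x,p)|<|g(x,a)|$. Then the disk centered at $x$ through $a,b$ strictly \emph{contains} $p$ and $q$, so it \emph{excludes} neither — but this is the wrong disk. Instead I should look at the disk centered at $x$ through $p,q$: since $|g(x,a)|>|g(x,p)|=|g(x,q)|$, this disk has both $a$ and $b$ strictly outside. Now $a$ and $b$ lie on opposite sides of $\ell(p,q)$ because $g(a,b)$ crosses $g(p,q)$; sliding the center of a disk through $p,q$ along $b(p,q)$ in one direction keeps $a$ outside (by Lemma~\ref{lem:diskcontain} applied to $P^+(p,q)$) and in the other direction keeps $b$ outside. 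Hence every geodesic disk through $p$ and $q$ excludes at least one of $a,b$, giving the conclusion.

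For the non-intersecting case I would follow the bookkeeping of Lemma~\ref{lem:quadrilateral} but track "farthest" rather than "nearest" endpoints. Since $b(a,b)$ and $b(p,q)$ do not meet, all of $b(a,b)$ lies strictly on one side of $b(p,q)$, so there is a fixed endpoint of $\{p,q\}$ that is the \emph{farther} one from every point of $b(a,b)$; say $|g(x',q)| \le |g(x',p)|$ is reversed, i.e. $|g(x',p)| \ge |g(x',q)|$ for all $x' \in b(a,b)$ — more carefully, say $p$ is the farther endpoint, so $|g(x',p)| \ge |g(x',a)| = |g(x',b)|$ fails in general; I need to pick, among the two endpoints of $g(a,b)$, the one that is never closer. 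Let me instead define: all points of $b(a,b)$ are farther from (say) $p$ than from $q$, and all points of $b(p,q)$ are farther from (say) $b$ than from $a$. Choose $x \in b(a,b)$ on the polygon boundary in the half-polygon of $\ell(a,b)$ not containing $p$ (so that $|g(x,p)|$ is "large"), and $y \in b(p,q)$ on the boundary in the half-polygon of $\ell(p,q)$ not containing $b$. If $|g(x,b)| \ge |g(x,p)|$ then the disk centered at $x$ through $a,b$ excludes $p$, and Lemma~\ref{lem:diskcontain} promotes this to: every disk through $a,b$ excludes $p$. Otherwise $|g(x,b)| < |g(x,p)|$; then $g(x,b)$ crosses $g(y,p)$ at some $z$, and the same triangle-inequality chain as in Lemma~\ref{lem:quadrilateral} (with the inequalities oriented to conclude $|g(z,b)| > |g(z,p)|$, hence $|g(y,b)| > |g(y,p)|$) shows the disk centered at $y$ through $p,q$ excludes $b$, which Lemma~\ref{lem:diskcontain} promotes to: every disk through $p,q$ excludes $b$.

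The main obstacle I expect is getting the side conventions and the choice of the boundary points $x,y$ exactly right so that the "farther endpoint" is consistently on the correct half-polygon and the monotonicity direction in Lemma~\ref{lem:diskcontain} actually yields exclusion rather than inclusion; this is the dual of the delicate case analysis at the end of Lemma~\ref{lem:quadrilateral}, and the sign of each triangle-inequality step must be flipped consistently. Everything else is a routine re-run of the earlier argument, which is why the lemma is stated "for completeness."
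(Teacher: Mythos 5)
Your first case (intersecting bisectors) is essentially the paper's argument: the disk centered at the intersection point of the two bisectors excludes both endpoints of the other pair, and Lemma~\ref{lem:diskcontain} (with roles exchanged) shows that moving the center along the bisector in one direction keeps one of them outside and in the other direction keeps the other outside. One caveat: your gloss that exclusion of $a\in P^+(p,q)$ is preserved when the center slides \emph{toward} the $P^+$ side is backwards --- by Lemma~\ref{lem:diskcontain}, disks whose centers lie further toward $P^+$ contain \emph{more} of $P^+$, so exclusion of a $P^+$ point propagates toward $P^-$. This does not hurt case~1, which only needs ``one direction / the other direction'', but it is exactly the directional issue that derails your second case.

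The non-intersecting case, as written, fails at three concrete points. (1) The branch ``if $|g(x,b)|\ge|g(x,p)|$ then the disk centered at $x$ through $a,b$ excludes $p$'' is false: that disk has radius $|g(x,b)|$, so the stated inequality puts $p$ \emph{inside} it; exclusion requires $|g(x,p)|>|g(x,b)|$. (2) Even after fixing the inequality, the promotion ``exclusion at the disk centered at $x$ implies every disk through $a,b$ excludes $p$'' uses the monotonicity of Lemma~\ref{lem:diskcontain} in the wrong direction: since exclusion of a point propagates as the center moves \emph{away} from that point's side, ``every disk through $a,b$ excludes $p$'' is equivalent to exclusion at the bisector endpoint on $p$'s side, whereas you placed $x$ on the side \emph{not} containing $p$ (and likewise $y$ on the side not containing $b$, so exclusion of $b$ at $y$ alone says nothing about the remaining disks). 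With your choice of $x$, the point whose exclusion you can hope to certify for \emph{all} disks through $a,b$ is $q$, so the test must compare $|g(x,b)|$ with $|g(x,q)|$, as the paper does. (3) The ``flipped'' triangle-inequality chain does not exist: from $|g(x,b)|<|g(x,p)|$ and $z\in g(x,b)\cap g(y,p)$, the geodesic decompositions together with the triangle inequality force $|g(z,b)|<|g(z,p)|$ --- this is precisely the containment chain of Lemma~\ref{lem:quadrilateral} --- and no reorientation yields $|g(z,b)|>|g(z,p)|$. The paper's proof instead starts from the complementary hypothesis that the disk centered at $x$ \emph{contains} $q$ (i.e.\ $|g(x,b)|>|g(x,q)|$) and runs the chain through a crossing of the \emph{other} pair of geodesics, $g(x,q)$ with $g(y,b)$, to conclude $|g(y,b)|>|g(y,q)|$. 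So the second half of your plan is not a routine sign flip of Lemma~\ref{lem:quadrilateral}: the subcase hypotheses, the bisector endpoints, and the crossing pair all have to be chosen differently, and as proposed the argument does not go through.
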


\begin{proof}
Suppose that the bisectors $ b(u,v) $ and $ b(p,q) $ intersect, see
Figure~\ref{fig:intersection}, and let $ x $ be a point in their intersection.
Then we have that $ \lvert g(x,u)\rvert = \lvert g(x,v)\rvert $ and $ \lvert g(x,p)\rvert = \lvert g(x,q)\rvert $. We can
assume, without loss of generality, that $ \lvert g(x,u)\rvert < \lvert g(x,p)\rvert $ since the
points in $ S $ are in general position, and $ u,v,p,q $ are not co-circular.
The case where $ \lvert g(x,u)\rvert > \lvert g(x,p)\rvert $ is symmetric. 
    
When $ \lvert g(x,u)\rvert < \lvert g(x,p)\rvert $, the geodesic disk through $ u $ and $ v $ with
center $x$ contains neither $p$ nor $q$.
Since $ g(u,v) $ and $ g(p,q) $ intersect, $ p $ and $ q $ are on different
sides of the extension path $ \ell(u,v) $. Note that if we exchange the roles of $ p $ by $ q $, and $ D $ by $ D' $ in Lemma~\ref{lem:diskcontain}, we have that $ D' \cap P^-(p,q) \subset D $. Hence, by
Lemma~\ref{lem:diskcontain}, if we move the center of this disk along the
bisector $ b(u,v) $ in one direction while keeping $u,v$ on its boundary, the
geodesic disk will never contain $ p $, and if we move the center in the other
direction, the geodesic disk will never contain $ q $.

Now, suppose that the bisectors $ b(u,v) $ and $ b(p,q) $ do not intersect, see
Figure~\ref{fig:non-intersection}. Hence, all points of the bisector $ b(u,v) $
are closer to one endpoint of $ g(p,q) $ than to the other one. Assume, without
loss of generality, that this point is $ p $, i.e., $ \lvert g(x',p)\rvert \leq \lvert g(x',q)\rvert,
\ \forall x'\in b(u,v) $. Similarly, all points of the bisector $ b(p,q) $ are
closer to one endpoint of $ g(u,v) $ than to the other one. Assume, without loss
of generality, that this point is $ v $.

The bisector $ b(u,v) $ intersects the boundary of the polygon. Also, the
extension path $ \ell(u,v) $ divides the polygon $P$ into two half-polygons, one
containing $p$ and the other containing $q$. Let $ x $ be the intersection of $
b(u,v) $ with the polygon boundary that is contained in the same half-polygon as
$ q $.  Similarly,  the extension path $ \ell(p,q) $ divides the polygon into
two half-polygons, and let $ y $ be the intersection of $ b(p,q) $ with the
polygon contained in the same half-polygon as $ u $.

By definition, we have that $ \lvert g(x,u)\rvert = \lvert g(x,v)\rvert $, and $ \lvert g(x,p)\rvert < \lvert g(x,q)\rvert
$. If $ \lvert g(x,v)\rvert < \lvert g(x,q)\rvert $, the geodesic disk with center at $ x $ through $
u, v $ does not contain $q$. By Lemma \ref{lem:diskcontain}, this means that
every geodesic disk through $u, v$ does not contain $q$. 

The only remaining case to consider is when $\lvert g(x,v)\rvert \geq \lvert g(x,q)\rvert $. By the
fact that $u,v,p,q$ are not co-circular, we have that in fact $ \lvert g(x,v) \rvert > \lvert g(x,q) \rvert $.
Since the bisectors do not intersect, we have that $g(x,q)$ intersects $g(y,v)$.
Let $z$ be a point in this intersection. By definition, we have $\lvert g(x,q)\rvert =
\lvert g(x,z)\rvert + \lvert g(z,q)\rvert$ and $\lvert g(y,v)\rvert = \lvert g(y,z)\rvert + \lvert g(z,v)\rvert$. By the triangle
inequality, we have $\lvert g(x,v)\rvert \leq \lvert g(x,z)\rvert + \lvert g(z,v)\rvert$, which with our
assumption that $\lvert g(x,v)\rvert > \lvert g(x,q)\rvert $ implies that $\lvert g(z,v)\rvert > \lvert g(z,p)\rvert$.
However, this means that $\lvert g(y,q)\rvert < \lvert g(y,z)\rvert + \lvert g(z,q)\rvert \leq \lvert g(y,z)\rvert +
\lvert g(z,v)\rvert = \lvert g(y,v)\rvert$. Therefore, when $\lvert g(x,v)\rvert \ge \lvert g(x,q)\rvert $, we have
$\lvert g(y,p)\rvert < \lvert g(y,v)\rvert$. By Lemma \ref{lem:diskcontain}, this means that every
geodesic disk through $p, q$ does not contain $v$. \end{proof}

We now show two results about the length of the sides of a triangle. These results are trivial in the plane, but in the geodesic setting have not been proven before. In particular, they will be useful in Section~\ref{sec:diametral} to prove the lower bounds on the diametral case.

\begin{lemma}
    \label{prop:geq}
    Let $ \triangle(u,v,w) $ be a geodesic triangle, such that $v$ is a convex vertex of $\triangledown(u,v,w)$ and $ \angle uvw \geq \frac{\pi}{3}$. Then $ \lvert g(u,w)\rvert \geq \min\{\lvert g(u,v)\rvert, \lvert g(v,w)\rvert\}$.
\end{lemma}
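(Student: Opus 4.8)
The plan is to deduce the statement from a single fact about Euclidean triangles by exploiting the non‑positive curvature of the geodesic metric on $P$. First I would record the (folklore) observation that a simple polygon $P$, with the geodesic metric, is a $\mathrm{CAT}(0)$ space: $P$ is simply connected, and the link of every point of $P$ is either an arc (at boundary points and vertices) or a circle of length $2\pi$ (at interior points), so it contains no closed geodesic of length less than $2\pi$; by the link condition this makes $P$ non‑positively curved, hence $\mathrm{CAT}(0)$ (see Bridson--Haefliger). The only consequence I need is the angle comparison: for every geodesic triangle with vertices $x,y,z$, the Alexandrov angle at $y$ between $g(y,x)$ and $g(y,z)$ is at most the angle at the corresponding vertex of the Euclidean comparison triangle (the triangle in $\R^2$ with the same three side lengths).

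Next I would reconcile the paper's angle with the Alexandrov angle. Since $v$ is a convex vertex of $\triangledown(u,v,w)$ we have $v=v'$, so $g(v,u)$ and $g(v,w)$ share no initial subpath; their first segments are exactly the two edges of the geodesic core incident to $v'$, and the convex angle between them is by definition $\angle uvw$. Because in a simple polygon a shortest path leaves its endpoint along a straight segment, the comparison angle $\tilde\angle(g(v,u)(t),v,g(v,w)(t'))$ is constant (equal to the planar angle between those two segments) for small $t,t'$, so the Alexandrov angle at $v$ is precisely $\angle uvw$.

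Now set $a=|g(u,v)|$, $b=|g(v,w)|$, $c=|g(u,w)|$. These are positive and satisfy the triangle inequality in the metric space $P$, so a Euclidean comparison triangle $\bar u\bar v\bar w$ with $|\bar v\bar u|=a$, $|\bar v\bar w|=b$, $|\bar u\bar w|=c$ exists; let $\bar\beta$ be its angle at $\bar v$. By the angle comparison, $\bar\beta\ge\angle uvw\ge\tfrac{\pi}{3}$, hence $\cos\bar\beta\le\tfrac12$. By the law of cosines, $c^{2}=a^{2}+b^{2}-2ab\cos\bar\beta\ge a^{2}+b^{2}-ab$. Assuming without loss of generality $a\le b$ (swapping $u$ and $w$ leaves $\angle uvw$ unchanged), we get $b^{2}-ab\ge 0$, so $c^{2}\ge a^{2}$ and therefore $|g(u,w)|=c\ge a=\min\{|g(u,v)|,|g(v,w)|\}$, as claimed. (The degenerate case in which the comparison triangle is collinear would force $\angle uvw\in\{0,\pi\}$; if it were $\pi$ then $c=a+b\ge\min\{a,b\}$ trivially, and $\angle uvw=0$ contradicts the hypothesis, so it may be ignored.)

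The main obstacle is really only the first step: either being comfortable invoking $\mathrm{CAT}(0)$/non‑positive curvature of the geodesic metric, or else supplying an elementary substitute for the angle‑comparison inequality. It is worth noting that first‑order information alone does not suffice: combining the convexity of $s\mapsto|g(u,\gamma(s))|$ along $g(v,w)=\gamma$ (already used in the paper, via Pollack et al.) with the first‑variation identity $\frac{d}{ds}\big|_{0^{+}}|g(u,\gamma(s))|=-\cos\angle uvw$ only yields $c\ge a-b/2$, which is too weak; one genuinely needs the second‑order comparison estimate that $\mathrm{CAT}(0)$ supplies. Hence I would present the $\mathrm{CAT}(0)$ argument, and, if a self‑contained proof is preferred, derive the required angle‑comparison inequality directly by unfolding along the two sides $g(v,u)$ and $g(v,w)$ and inducting on the number of reflex vertices they contain.
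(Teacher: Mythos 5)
Your proof is correct, but it follows a genuinely different route from the paper. You obtain the bound from comparison geometry: the simple polygon with its geodesic metric is $\mathrm{CAT}(0)$, the Alexandrov angle at $v$ is bounded below by the planar angle $\angle uvw$ between the initial straight segments of $g(v,u)$ and $g(v,w)$ (equality when the wedge between them lies locally in $P$; in general only the inequality is needed, since geodesic distances dominate Euclidean chord lengths, and this inequality is the direction your argument uses), and then the $\mathrm{CAT}(0)$ angle comparison plus the law of cosines in the Euclidean comparison triangle gives $|g(u,w)|^2 \geq a^2+b^2-ab \geq \min\{a,b\}^2$. The paper instead stays elementary and self-contained: it works inside the geodesic core $\triangledown(u,v,w)=\triangle(u',v,w')$, intersects $g(u',w')$ with the two lines supporting the edges of the core incident to $v$ to get Euclidean points $p,q$, applies the planar fact $|pq|\geq\min\{|vp|,|vq|\}$ to the straight triangle $\triangle(v,p,q)$, and finishes with triangle-inequality bookkeeping along $g(v,u)$ and $g(u,w)$. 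What your approach buys is brevity and generality (the statement is really a $\mathrm{CAT}(0)$ fact, and the same comparison argument immediately gives Lemma~\ref{prop:leq} and sharper hinge estimates), at the cost of importing machinery (Bridson--Haefliger) external to the paper and of the small verification relating the paper's core angle to the Alexandrov angle, which you correctly flag and handle; the paper's argument avoids any curvature formalism, which matches its overall elementary toolkit (convexity of distance functions from Pollack et al.). Your closing remark is also apt: your proposed self-contained fallback (unfolding along the two sides) is essentially the spirit of the paper's core-based construction.
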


\begin{proof}
    Consider the geodesic core $ \triangledown(u,v,w) = \triangle (u',v,w') $ of $ \triangle (u,v,w) $. Recall that $ \angle uvw$ is the angle at vertex $v$ in $ \triangledown(u,v,w) $. Let $ q $ be the intersection of the geodesic $ g(u',w') $ with the line containing the segment of $ g(v,u) $ adjacent to $ v $. Symmetrically, let $ p $ be the intersection of the geodesic $ g(u',w') $ with the line containing the segment of $ g(v,w) $ adjacent to $ v $, see Figure~\ref{fig:proposition2}.

    \begin{figure}[tb]
        \centering  \includegraphics{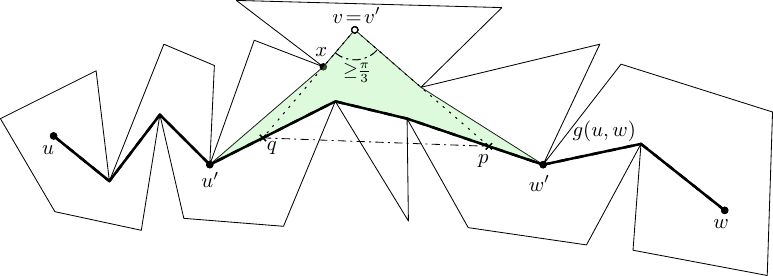}
        \caption{The angle $ \angle uvw $ is greater than $ \frac{\pi}{3} $, and $ \lvert vq\rvert < \lvert vp\rvert $.}
        \label{fig:proposition2}
    \end{figure}

    Since $ \triangle(v,p,q) $ is a Euclidean triangle, we know that $ \lvert pq\rvert \geq
    \min\{\lvert vp\rvert, \lvert vq\rvert\} $. We can assume, without loss of generality, that $ \lvert pq\rvert
    \geq \lvert vq\rvert $. Let $x$ be the vertex that is the endpoint of the first edge of
    $g(v,u)$. By construction $\lvert vq\rvert = \lvert vx\rvert + \lvert xq\rvert$. Note that $\lvert g(p,q)\rvert \ge \lvert pq\rvert
    \ge \lvert vq\rvert = \lvert vx\rvert + \lvert xq\rvert$ (1). By the triangle inequality, $\lvert g(x,u')\rvert \leq \lvert xq\rvert +
    \lvert g(q,u')\rvert$. This implies that $\lvert g(x,u')\rvert - \lvert xq\rvert \leq \lvert g(q,u')\rvert$ (2).
    Putting it all together, we have   
    \begin{align*}
        \lvert g(v,u)\rvert  & =  \lvert g(u,u')\rvert + \lvert g(v,u')\rvert \mbox{ by definition}\\
                  & = \lvert g(u,u')\rvert + \lvert vx\rvert +\lvert g(x,u')\rvert \mbox{ by definition}\\
                  & = \lvert g(u,u')\rvert + \lvert vx\rvert +\lvert g(x,u')\rvert - \lvert xq\rvert + \lvert xq\rvert\\
                  & \le \lvert g(u,u')\rvert + \lvert vx\rvert +\lvert g(q,u')\rvert + \lvert xq\rvert \mbox{ by (2)}\\
                  & \le \lvert g(q,u)\rvert + \lvert g(p,q)\rvert\mbox{ by (1)}\\
                  & \le \lvert g(p,u)\rvert\\
                  & \le \lvert g(w,u)\rvert
    \end{align*}
    
We conclude that $ \lvert g(u,w)\rvert \geq \min\{\lvert g(u,v)\rvert, \lvert g(v,w)\rvert\} $. \end{proof}

The next lemma presents a  result analogous to Lemma~\ref{prop:geq}, but when the angle $ \angle uvw $ is \emph{at most} $ \frac{\pi}{3}$.

\begin{lemma}
    \label{prop:leq}
    Let $ \triangle(u,v,w) $ be a geodesic triangle, such that $ \angle uvw \leq \frac{\pi}{3}$. Then we have that the edge $ \lvert uw\rvert \leq \max\{\lvert g(u,v)\rvert, \lvert g(v,w)\rvert\}$.
\end{lemma}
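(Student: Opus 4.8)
The plan is to follow the proof of Lemma~\ref{prop:geq}, with the inequality coming from the angle condition replaced by its opposite. First I would reduce to the case in which $v$ is a convex vertex of $\triangledown(u,v,w)$. Let $v'$ be the core vertex of $v$, i.e.\ the point where $g(u,v)$ and $g(v,w)$ diverge. Since $v'$ lies on both $g(u,v)$ and $g(v,w)$ we have $|g(u,v')|\le|g(u,v)|$ and $|g(v',w)|\le|g(v,w)|$, and deleting the common prefix $g(v,v')$ changes neither the core nor the angle at that corner, so $\angle uv'w=\angle uvw\le\frac{\pi}{3}$. Hence it suffices to prove $|uw|\le\max\{|g(u,v')|,|g(v',w)|\}$, and from now on $v$ is a convex vertex of $\triangledown(u,v,w)=\triangle(u',v,w')$. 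I would then introduce exactly the auxiliary objects of Lemma~\ref{prop:geq}: let $x$ (resp.\ $y$) be the endpoint of the first edge of $g(v,u)$ (resp.\ $g(v,w)$), let $\ell_u$ (resp.\ $\ell_w$) be the line supporting that edge, and let $q$ (resp.\ $p$) be the point where $\ell_u$ (resp.\ $\ell_w$) meets the core side $g(u',w')$. Then $v,x,q$ are collinear, so $|vq|=|vx|+|xq|$, likewise $v,y,p$, and $\triangle(v,p,q)$ is a Euclidean triangle with $\angle pvq=\angle uvw\le\frac{\pi}{3}$.

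The hypothesis enters only through the law of cosines in $\triangle(v,p,q)$: since $\cos(\angle pvq)\ge\frac12$,
\begin{align*}
|pq|^2 \;\le\; |vp|^2+|vq|^2-|vp|\,|vq| \;\le\; \max\{|vp|,|vq|\}^2 ,
\end{align*}
so $|pq|\le\max\{|vp|,|vq|\}$, which is the precise counterpart of the bound $|pq|\ge\min\{|vp|,|vq|\}$ used in Lemma~\ref{prop:geq}. After relabelling $u\leftrightarrow w$ if necessary we may assume $|vq|\ge|vp|$, so that $|pq|\le|vq|=|vx|+|xq|$.

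Finally I would run the telescoping of Lemma~\ref{prop:geq} "in reverse", converting a lower bound on $|g(u,w)|$ into an upper bound on $|uw|$. Using that $q$ and $p$ lie on the core side $g(u',w')$, which is a subpath of $g(u,w)$, the paths $g(u,q)$, $g(q,p)$, $g(p,w)$ are subarcs of $g(u,w)$; in particular $|g(u,q)|=|g(u,u')|+|g(u',q)|$, $|g(p,w)|=|g(p,w')|+|g(w',w)|$, and the reverse triangle inequality gives $|g(x,u')|\ge|g(u',q)|-|xq|$. Combining these with $|uw|\le|uq|+|qp|+|pw|$, the estimates $|uq|\le|g(u,q)|$, $|pw|\le|g(p,w)|$, and $|pq|\le|vx|+|xq|$ from the previous step, I would telescope to
\begin{align*}
|uw| \;\le\; |g(u,u')|+|vx|+|g(x,u')| \;=\; |g(u,u')|+|g(v,u')| \;=\; |g(u,v)| \;\le\; \max\{|g(u,v)|,|g(v,w)|\},
\end{align*}
with the symmetric case $|vp|\ge|vq|$ yielding $|uw|\le|g(v,w)|$. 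I expect this last step to be the main obstacle. Unlike in Lemma~\ref{prop:geq}, where the slack $|g(p,q)|\ge|pq|$ between geodesic and Euclidean length is harmless, here I must make sure that only the Euclidean shortcut $|pq|\le|vx|+|xq|$ — never the longer $|g(q,p)|$ — is charged along the core side (so the bends of $g(u',w')$ between $p$ and $q$ cannot inflate the estimate), and I must check that all of the claimed subarc incidences ($q,p\in g(u,w)$, $x\in g(v,u')$, $q$ on $g(p,u)$, $p$ on $g(w,u)$) actually hold in the geodesic core; arranging the bookkeeping so that it lands on $|g(u,v)|$ rather than on the weaker $|g(u,w)|$ is the crux.
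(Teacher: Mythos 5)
Your preparatory steps are fine (the reduction to a convex core vertex, the auxiliary points $p,q$, and the law-of-cosines bound $|pq|\le\max\{|vp|,|vq|\}$), but the final ``reverse telescoping'' --- which you yourself flag as the crux --- is a genuine gap, not a bookkeeping issue. Substituting your three estimates into $|uw|\le |uq|+|qp|+|pw|$ yields only $|uw|\le |g(u,u')|+|g(u',q)|+|vx|+|xq|+|g(p,w')|+|g(w',w)|$, and this cannot collapse to $|g(u,u')|+|vx|+|g(x,u')|=|g(u,v)|$: the terms $|g(p,w')|+|g(w',w)|$ (the entire $w$-side beyond $p$) enter with a plus sign and nothing absorbs them, and the triangle inequality relates $|g(x,u')|$ and $|xq|+|g(q,u')|$ in the wrong direction for your purpose (it upper-bounds the quantity you want to keep, not the ones you want to discard). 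This is precisely why the argument of Lemma~\ref{prop:geq} does not reverse: there, every slack term ($|g(p,q)|\ge|pq|$, $|g(p,u)|\le|g(w,u)|$) pushes the estimate in the helpful direction because the goal is a \emph{lower} bound on the longest quantity, whereas here every geodesic detour inflates an \emph{upper} bound you need to keep small. Moreover, steering toward the specific sub-goal $|uw|\le|g(u,v)|$ on the strength of $|vq|\ge|vp|$ is itself unjustified: that comparison only sees the two fan segments at $v$, while $|g(u,v)|$ and $|g(v,w)|$ also contain the arbitrary tails $|g(u',u)|$ and $|g(w',w)|$, so the ``correct side'' to land on is not determined by $|vq|$ versus $|vp|$.

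The paper's proof takes a much shorter route that bypasses the core construction entirely: it works with the straight-line triangle on the three points, using the planar fact that the side opposite an angle of at most $\frac{\pi}{3}$ cannot exceed both other sides, and then the single inequality $|uv|\le|g(u,v)|$ (a chord is never longer than the geodesic between its endpoints). Since the quantity to be bounded is the Euclidean length $|uw|$, this direct chord-versus-geodesic comparison is the natural mechanism, and it is exactly the ingredient your adaptation of Lemma~\ref{prop:geq} lacks; as it stands, your proposal does not prove the lemma.
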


\begin{proof}
    Since the three points $u, v, w$ form a Euclidean triangle, we know that $ \lvert uw\lvert \leq \max\{\lvert uv\rvert, \lvert vw\lvert\} $. We can assume, without loss of generality, that $ \lvert uw\lvert \leq \lvert uv\rvert $. Since $ \lvert uv\rvert \leq \lvert g(u,v)\rvert $, we thus have that $ \lvert uw\lvert \leq \lvert g(u,v)\rvert \leq \max\{\lvert g(u,v)\rvert, \lvert g(v,w)\rvert\} $.
\end{proof}

\subsection{Some differences}

In contrast to the similarities in the previous section, there are properties that cannot be generalized from the Euclidean  to the geodesic setting. We highlight a few differences. One difference which we alluded to in Section \ref{sec:introduction} is that the geodesic metric, unlike the Euclidean metric, does not necessarily have bounded doubling dimension. Another difference is that the geodesic bisector of a pair of points inside a simple polygon can intersect a line segment more than once~\cite{BoseDD23}. 

The Voronoi diagrams of order $ k $ were one of the tools used to prove the lower bounds in the Euclidean setting, see, e.g.,~\cite{claverol2021circles,edelsbrunner1989circles,ramos2009depth}. However, some important properties of order-$k$ Voronoi diagrams do not extend to the geodesic setting. For instance, in the Euclidean setting, the bisector of a pair of points is a line, whereas in the geodesic setting, the bisector is a finite curve. Moreover, this means that in the Euclidean case, three points in general position uniquely define a disk with those points on the boundary. On the other hand, given three points inside a polygon $P$, there may not exist a point in $ P $ equidistant to all three. Thus, there are cases where three points do not define a geodesic disk with the three points on the boundary~\cite{aronov1993furthest}. 

Another difference between the two settings is that the well-known lifting transformation that maps a point $ (x, y) \in \mathbb{R}^2 $ to the point $ (x, y, x^2+ y^2) \in \mathbb{R}^3 $ used in~\cite{ramos2009depth} to obtain the lower bound of $ \frac{n}{4.7} $ is no longer applicable in the geodesic setting. The exact number of edges of the order-$k$ Voronoi diagram is known in the Euclidean case~\cite{lee1982k}. The formula for this exact number of edges was fundamental to Edelsbrunner et al.'s lower bound proof~\cite{edelsbrunner1989circles}. However, only upper bounds are known in the geodesic case~\cite{bohler2015complexity}. In fact, this is one of the reasons why the lower bounds we obtain in the geodesic setting are not as strong as in the Euclidean setting.

\section{The general case}
\label{sec:every}

In this section we provide a generalization to the geodesic setting of lower bounds that are known for the original problem in the plane. In particular, we generalize three results from~\cite{hayward1989note,hayward1989some,neumann1988combinatorial}. We note that upper bounds for the Euclidean setting are in general also valid for the geodesic setting, since one can take the constructions provided in~\cite{akiyama1996circles,claverol2021circles,hayward1989note}, and enclose the relevant set of points in a bounding polygon, big enough so that the geodesic disks through every three points coincide with the corresponding Euclidean disks.

\subsection{A Constant Fraction of Points Inside Geodesic Disks}
\label{sec:main}

We begin by adapting some definitions and results from~\cite{neumann1988combinatorial} to our setting. The \emph{intersection number} $ I(S) $ of a set of points $ S $ in a polygon is defined as the number of different pairs of geodesic paths $ g(u,v) $, $ g(x,y) $ for $ u,v,x,y \in S $ such that $ g(u,v) $ intersects $ g(x,y) $. Then, the \emph{intersection graph} $ G(S) = (V(G(S)), E(G(S))) $ is defined with $ V(G(S)) = \{g(u,v) \mid u,v \in S, u \neq v\} $, and two vertices $ g(u,v) $, $ g(x,y) $ being joined by an edge in $ G(S) $ if $ g(u,v) $ intersects $ g(x,y) $ and $u,v,x,y$ are four distinct points. Finally, the oriented graph $ \overrightarrow{G}(S) $
of $ G(S) $ is defined orienting an edge $ \{g(u,v), g(x,y)\} $ as $ g(u,v) \rightarrow g(x,y) $ if any geodesic disk through $ u $ and $ v $ contains~$ x $ or $ y $, or otherwise orienting $ g(x,y) \rightarrow g(u,v) $. This orientation is consistent because of Lemma~\ref{lem:quadrilateral}.

\begin{theorem}
    \label{thm:urrutia}
    For any $ n\geq 5 $, $ \Pi(n) \geq \left \lceil\frac{n-2}{60} \right \rceil \approx 0.0166n $.
\end{theorem}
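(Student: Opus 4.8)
The plan is to adapt the counting argument of Neumann-Lara and Urrutia~\cite{neumann1988combinatorial} to the oriented intersection graph $\overrightarrow{G}(S)$, which has $\binom{n}{2}$ vertices and exactly $I(S)$ edges. The first step is to lower-bound $I(S)$. Every pair of crossing geodesics $g(a,b)$, $g(c,d)$ has four distinct endpoints (geodesics sharing an endpoint do not cross, see Section~\ref{sec:comparison}), so it determines the $4$-element subset $\{a,b,c,d\}$ of $S$, and this map is injective on crossing pairs. Conversely, any four points of $S$ in geodesically convex position give rise to at least one crossing pair among the three ``diagonal'' geodesics joining them. Granting this, $I(S)$ is at least the number of $4$-subsets of $S$ in geodesically convex position, and a double count over the $\binom{n}{5}$ five-point subsets --- using that every five points in general position contain four in geodesically convex position, and that each such convex $4$-subset lies in exactly $n-4$ five-subsets --- gives $I(S) \ge \binom{n}{5}/(n-4) = \frac{1}{5}\binom{n}{4}$.

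Next, since the out-degrees in $\overrightarrow{G}(S)$ sum to $I(S)$ over its $\binom{n}{2}$ vertices, there is a pair $u,v\in S$ with $g(u,v)$ of out-degree at least $I(S)/\binom{n}{2} \ge \frac{1}{5}\binom{n}{4}/\binom{n}{2} = \frac{(n-2)(n-3)}{60}$. By the construction of $\overrightarrow{G}(S)$, whose consistency is guaranteed by Lemma~\ref{lem:quadrilateral}, an out-edge $g(u,v)\to g(x,y)$ means that every geodesic disk through $u$ and $v$ contains $x$ or $y$. Let $D^{\ast}$ be a geodesic disk through $u$ and $v$ minimizing $m:=|S\cap D^{\ast}|$; at least one such disk exists, e.g.\ the diametral disk centered at the midpoint of $g(u,v)$. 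Every out-neighbor of $g(u,v)$ is then a pair $\{x,y\}\subseteq S\setminus\{u,v\}$ meeting the set $W:=D^{\ast}\cap(S\setminus\{u,v\})$ of size $m-2$, so a union bound over $W$ shows the out-degree of $g(u,v)$ is at most $(m-2)(n-3)$. Combining the two estimates, $(m-2)(n-3)\ge\frac{(n-2)(n-3)}{60}$, hence $m\ge\frac{n-2}{60}+2\ge\left\lceil\frac{n-2}{60}\right\rceil$. Since $m$ is the minimum number of points of $S$ in any geodesic disk through $u$ and $v$, this yields $\Pi(n)\ge\left\lceil\frac{n-2}{60}\right\rceil$.

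The step I expect to be the main obstacle is the geodesic analogue of the Erd\H{o}s--Szekeres ``happy ending'' fact (every five points in general position in a simple polygon contain four in geodesically convex position), since in the geodesic setting one cannot extend segments to infinity or take ordinary convex hulls. I would derive it from a geodesic Radon-type property --- any four points in general position split into two groups whose geodesic convex hulls meet --- which gives, for any four points, either one of them inside the geodesic hull of the other three, or a crossing pair among their geodesics; the usual five-point case analysis (geodesic hull with at least four vertices, versus a ``geodesic triangle'' containing two of the points, separated from one hull vertex by a suitable geodesic) then finishes it. A related point requiring care is the crossing-pair/convex-$4$-subset correspondence used in the first step: it rests on the facts recalled in Section~\ref{sec:comparison} that two geodesics cross at most once and that geodesics sharing an endpoint do not cross, together with the structure of the geodesic convex hull of four points in general position.
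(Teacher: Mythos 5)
The second half of your argument is exactly the paper's proof: orient the intersection graph via Lemma~\ref{lem:quadrilateral}, average the out-degrees over the $\binom{n}{2}$ vertices to find a pair $u_0,v_0$ whose geodesic has out-degree at least $\frac{(n-2)(n-3)}{60}$, and then observe that a single point of $S\setminus\{u_0,v_0\}$ can account for at most $n-3$ of these out-neighbours, so any disk through $u_0,v_0$ contains at least $\left\lceil\frac{n-2}{60}\right\rceil$ further points. (Your detour through a disk $D^{\ast}$ minimizing $|S\cap D^{\ast}|$ is harmless but unnecessary: the union bound you apply to $D^{\ast}$ applies verbatim to \emph{every} disk through $u_0$ and $v_0$, which is how the paper phrases it.)

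The genuine gap is in your first step, the bound $I(S)\geq\binom{n}{5}/(n-4)$. You reduce it to two geodesic facts that you do not prove: that any five points in general position in a simple polygon contain four in geodesically convex position, and that a geodesically convex quadruple has a crossing pair among its diagonal geodesics; for the first you only sketch a plan via a geodesic Radon-type property and yourself flag it as the main obstacle. As written, the theorem therefore rests on unestablished claims. The paper closes this step in one line and without any convexity machinery: the $\binom{5}{2}$ geodesics among any five points of $S$ form a drawing of $K_5$ inside $P$; since two geodesics intersect at most once and geodesics sharing an endpoint do not cross, a crossing-free drawing would contradict Kuratowski's theorem (non-planarity of $K_5$), so some two geodesics with four distinct endpoints $w,x,y,z$ intersect, which is all the double count over $5$-subsets needs. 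So your route is not wrong in spirit, but it replaces a one-line planarity argument by an Erd\H{o}s--Szekeres/Radon development that is both harder and absent from your write-up; to make your proposal a proof you would either have to supply those geodesic lemmas in full or simply substitute the Kuratowski argument. A small additional remark: your claim that the map from crossing pairs to $4$-subsets is injective is not what the count requires (and is not obviously true); what you need is the reverse assignment, one crossing pair chosen per $5$-subset (or per convex $4$-subset), which is injective because the crossing pair determines its four endpoints.
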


\begin{proof}
    By Kuratowski's Theorem~\cite{kuratowski1930probleme}, each subset of $ 5 $ points of a set $ S $ contains four points $ w, x, y, z $ such that $ g(w,x) \cap g(y,z) \neq \emptyset $. Moreover, the subset $ \{w,x,y,z\} $ appears in exactly $ n - 4$ subsets of $ S $ with five elements. Thus, $ I(S) \geq \frac{\binom{n}{5}}{n-4} $. By the definition of the intersection graph $ G(S) $, $ \lvert E(G(S))\rvert = I(S) $, thus $ \lvert E(G(S))\rvert \geq \frac{\binom{n}{5}}{n-4} $. 
    
    Let $ d^{+}(g(x,y)) $ be the out-degree of a vertex $ g(x,y) $ in $ \overrightarrow{G}(S) $. Then,
    \begin{equation*}
        \sum_{g(x,y)\in V(\overrightarrow{G}(S))} d^{+}(g(x,y)) = \lvert E(G(S))\rvert \geq \frac{\binom{n}{5}}{n-4}. 
    \end{equation*}
    
    Since we have exactly $ \binom{n}{2} $ vertices in $ \overrightarrow{G}(S) $, there is a vertex $ g(u_0,v_0) $ with
    \begin{equation*}
        d^{+}(g(u_0,v_0)) \geq \frac{\frac{\binom{n}{5}}{n-4}}{\binom{n}{2}} = \frac{(n-2)(n-3)}{60}.
    \end{equation*}
    
    This means that any geodesic disk through $ u_0 $ and $ v_0 $ contains at least one endpoint of $ \frac{(n-2)(n-3)}{60} $ geodesic paths $ g(x,y), \ x,y \in S $. Since each point could appear in at most $ n-3$ of these geodesic paths, any geodesic disk through $ u_0 $ and $ v_0 $ contains at least $ \left \lceil \frac{\frac{(n-2)(n-3)}{60}}{n-3}\right \rceil = \left \lceil \frac{n-2}{60} \right \rceil $ points of~$ S $ different from $ u_0 $ and $ v_0 $.
\end{proof}

We now establish an improved lower bound on the number of points contained in any disk with $ x $ and $ y $ on its boundary, by generalizing the results by Hayward in~\cite{hayward1989note} to the geodesic setting. We begin with a preliminary combinatorial result.

\begin{lemma}
    \label{lem:hayward}
    Among any set $ S' $ of $ k $ points in a polygon $ P $, there are at least $ \binom{k-3}{2} $ pairs of points $ x,y \in S' $ such that every geodesic disk through $ x $ and $ y $ contains at least one of the other $ k-2 $ points of $ S' $.
\end{lemma}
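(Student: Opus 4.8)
The plan is to count the \emph{bad} pairs and take complements. Call a pair $\{x,y\}\subseteq S'$ \emph{bad} if some geodesic disk through $x$ and $y$ avoids all of the other $k-2$ points of $S'$, and \emph{good} otherwise. If I can show that there are at most $3k-6$ bad pairs, then the number of good pairs is at least $\binom{k}{2}-(3k-6)=\frac{k^2-7k+12}{2}=\binom{k-3}{2}$, which is exactly the claimed bound (the statement is vacuous for $k\le 4$, and for $k\le 2$ trivially so). Thus everything reduces to bounding the bad pairs, and the strategy is the geodesic analogue of the Euclidean fact that bad pairs are Delaunay edges: I will realize the bad pairs as the edges of a planar graph on $S'$.

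Concretely, I would form the graph $H$ with vertex set $S'$ whose edges are precisely the bad pairs, and draw each edge $\{x,y\}$ as the geodesic $g(x,y)$ inside $P$. The crux is to show that the geodesics of two distinct bad pairs do not cross. If the two bad pairs share an endpoint, their geodesics run along a common initial route and then split, so together with the geodesic joining their two other endpoints they bound a (weakly simple) geodesic triangle; two sides of a polygon do not cross, so these geodesics do not cross. If the two bad pairs $\{x,y\}$ and $\{a,b\}$ are disjoint and $g(x,y)$ crossed $g(a,b)$, then Lemma~\ref{lem:quadrilateral} would force that either every geodesic disk through $x,y$ contains $a$ or $b$, or every geodesic disk through $a,b$ contains $x$ or $y$; since the pairs are disjoint, the first alternative makes $\{x,y\}$ good and the second makes $\{a,b\}$ good, contradicting badness of both. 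Hence no two edges of this drawing of $H$ cross.

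From here the argument is routine. A crossing-free drawing of $H$ by geodesic arcs makes $H$ planar: the only degeneracies are that arcs sharing an endpoint overlap along an initial segment, or that two arcs are tangent at an interior point, and in both cases the arcs stay on one side of one another, so a local perturbation removes the overlap/tangency without creating a crossing (one may also appeal to general position to exclude interior tangencies of geodesics of disjoint pairs). Therefore $H$ is a simple planar graph on $k\ge 3$ vertices, so $|E(H)|\le 3k-6$, giving at least $\binom{k}{2}-(3k-6)=\binom{k-3}{2}$ good pairs. The main obstacle is the non-crossing step, i.e.\ importing ``bad pairs never cross'' into the geodesic setting, and that is precisely what Lemma~\ref{lem:quadrilateral} supplies; the planar edge bound and the identity $\binom{k}{2}-3k+6=\binom{k-3}{2}$ are then immediate.
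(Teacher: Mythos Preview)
Your proposal is correct and follows essentially the same approach as the paper: define the graph of ``bad'' pairs, draw it with geodesics, use Lemma~\ref{lem:quadrilateral} to rule out crossings, and apply Euler's formula to get the $3k-6$ bound. The only difference is that you explicitly treat the shared-endpoint case and the planarity perturbation, whereas the paper's definition of ``intersect'' (requiring a side-switching crossing) lets it bypass those details.
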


\begin{proof}
    Let $ G = (V,E) $ be the graph where $ V\!=\!S' $, and two vertices are joined by an edge if at least one of the geodesic disks through them does not contain any of the other $ k-2 $ points of $ S' $. The edge between $ u $ and $ v $ is drawn as the geodesic path $ g(u,v) $. We claim that~$ G $ is planar, and the edges can be drawn without crossings inside $P$. Suppose there are two edges $ g(u,v), g(x,y) $ in $ G $, for $ u,v,x,y \in S' $, that intersect. By Lemma~\ref{lem:quadrilateral}, one of the edges has the property that every geodesic disk through both of its endpoints contains at least one of the endpoints of the other edge, which contradicts the definition of the edges of $ G $. Thus, $ G $ is a planar graph with $ k $ vertices.
    By Euler's formula, $ \lvert E\rvert \leq 3k-6 $, so $ S' $ contains at least $ \binom{k}{2} - (3k-6) = \binom{k-3}{2} $ pairs that satisfy the property.
\end{proof}

Now, using Lemma~\ref{lem:hayward}, we obtain a tighter lower bound for the number of points contained in any geodesic disk with two points $ u $ and $ v $ on its boundary.

\begin{theorem}
    \label{thm:hayward}
    For any $ n\geq 8 $, $ \Pi(n) \geq \left \lceil\frac{5}{84}(n-2) \right \rceil = \left \lceil\frac{n-2}{16.8}\right\rceil \approx 0.0595n $.
\end{theorem}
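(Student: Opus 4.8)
The plan is to mimic the counting argument of Theorem~\ref{thm:urrutia}, but replace the crude Kuratowski bound on the intersection number by the much stronger combinatorial estimate of Lemma~\ref{lem:hayward}. First I would fix a parameter $k$ (the size of subsets of $S$ we examine; in Hayward's Euclidean argument $k=5$ is not optimal and one takes a slightly larger constant, so I would carry $k$ symbolically and optimize at the end, expecting $k$ around $7$). For each subset $S'\subseteq S$ with $|S'|=k$, Lemma~\ref{lem:hayward} guarantees at least $\binom{k-3}{2}$ \emph{good pairs} $\{x,y\}\subseteq S'$ such that every geodesic disk through $x$ and $y$ contains one of the other $k-2$ points of $S'$. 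Orienting the edge $g(x,y)\to g(x',y')$ exactly as in the construction of $\overrightarrow{G}(S)$ (justified by Lemma~\ref{lem:quadrilateral}), each good pair in $S'$ contributes at least one to the out-degree of some vertex $g(x,y)$ of $\overrightarrow{G}(S)$, where the head is one of the $\binom{k-2}{2}$ pairs formed among the remaining $k-2$ points.

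Next I would do the double counting. Summing over all $\binom{n}{k}$ subsets $S'$, the total count of (subset, good pair, witnessing pair) triples is at least $\binom{n}{k}\binom{k-3}{2}$. On the other hand, a fixed ordered pair $\big(g(u,v),g(x,y)\big)$ of disjoint pairs of points can be the (good pair, witnessing pair) of a given $S'$ only if $\{u,v,x,y\}\subseteq S'$, and once these four points are pinned down there are $\binom{n-4}{k-4}$ ways to complete $S'$. Hence
\begin{equation*}
\sum_{g(u,v)\in V(\overrightarrow{G}(S))} d^{+}(g(u,v)) \;\geq\; \frac{\binom{n}{k}\binom{k-3}{2}}{\binom{n-4}{k-4}} \;=\; \binom{k-3}{2}\,\frac{n(n-1)(n-2)(n-3)}{k(k-1)(k-2)(k-3)}.
\end{equation*}
Since $\overrightarrow{G}(S)$ has $\binom{n}{2}$ vertices, there is a vertex $g(u_0,v_0)$ whose out-degree is at least this sum divided by $\binom{n}{2}$, i.e. at least $c_k\,(n-2)(n-3)$ for an explicit constant $c_k=\frac{\binom{k-3}{2}}{\binom{k}{4}}\cdot\frac{1}{\binom{4}{2}}$ after simplification. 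As in Theorem~\ref{thm:urrutia}, every geodesic disk through $u_0$ and $v_0$ contains an endpoint of each of these $c_k(n-2)(n-3)$ geodesic paths $g(x,y)$, and since each point of $S$ lies on at most $n-3$ of them, such a disk contains at least $\big\lceil c_k(n-2)\big\rceil$ points of $S$ other than $u_0,v_0$.

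Finally I would optimize the constant $c_k$ over integer $k$; the claimed bound $\tfrac{5}{84}(n-2)$ should come out for the optimal choice (one checks $k=7$ gives $c_7 = \binom{4}{2}/\binom{7}{4}\cdot\tfrac16 = 6/35\cdot\tfrac16 = 1/35$—so I would instead recount carefully, as Hayward's optimal value $5/84$ arises from a more refined accounting of how many witnessing pairs each good pair can have, or from summing out-degrees with multiplicity; I expect the true argument tracks, for each good pair, \emph{all} the points it can be forced to contain rather than just one, which is where the factor $\tfrac{5}{84}$ beats the naive $\tfrac{1}{60}$). The main obstacle is therefore not the geodesic geometry—Lemmas~\ref{lem:diskcontain}, \ref{lem:quadrilateral} and~\ref{lem:hayward} already package everything needed, so the argument is purely combinatorial and identical to the Euclidean one—but getting the counting bookkeeping tight enough to reach the stated constant rather than a weaker one; I would follow Hayward~\cite{hayward1989note} step for step here.
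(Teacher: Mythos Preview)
Your proposal has a genuine conceptual gap: you try to route Lemma~\ref{lem:hayward} through the oriented intersection graph $\overrightarrow{G}(S)$, but the ``good pairs'' produced by that lemma are \emph{not} defined by having $g(x,y)$ intersect some other geodesic path. They arise as the complement of a planar graph, and their defining property is that every disk through $x,y$ contains some \emph{point} of $S'\setminus\{x,y\}$---a point that moreover may vary from disk to disk. There is thus no well-defined ``witnessing pair'' $g(x',y')$ to serve as the head of an oriented edge, and your triple count (subset, good pair, witnessing pair) has no meaning. This is why your computation stalls at $1/35$ and you end up conjecturing that ``the true argument tracks all the points it can be forced to contain.''

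The paper's argument abandons the intersection graph entirely and is much simpler. One averages the $\binom{k-3}{2}\binom{n}{k}$ total (subset, dominating pair) incidences over the $\binom{n}{2}$ pairs, obtaining a pair $\{u,v\}$ that dominates at least $\binom{k-3}{2}\binom{n}{k}/\binom{n}{2}$ subsets. Now fix any single disk $D$ through $u,v$: each dominated subset contributes at least one point of $D\setminus\{u,v\}$, and since any point $p\in S\setminus\{u,v\}$ lies in at most $\binom{n-3}{k-3}$ of the $k$-subsets containing $u,v$, the disk $D$ contains at least
\[
\frac{\binom{k-3}{2}\binom{n}{k}}{\binom{n}{2}\binom{n-3}{k-3}}=(n-2)\,\frac{(k-3)(k-4)}{k(k-1)(k-2)}
\]
points besides $u,v$. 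This coefficient is maximized at $k=8$ (and $k=9$), giving $\tfrac{5}{84}$; your guess of $k\approx 7$ is also off. The key missing idea in your proposal is the final pigeonhole over \emph{points} (with the $\binom{n-3}{k-3}$ denominator) rather than over pairs or paths.
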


\begin{proof}
    Two points in a subset $ S'\subset S $ of $ k $ points \emph{dominate}~$S'$ if every geodesic disk through them contains at least one of the other $ k-2 $ points of~$ S' $. There are $\binom{n}{k}$ subsets of size $k$. By Lemma~\ref{lem:hayward}, each of these sets has $\binom{k-3}{2}$ dominating pairs. The average number of subsets dominated by a pair is $ \frac{\binom{k-3}{2}\binom{n}{k}}{\binom{n}{2}}$. Thus, there is a pair of points $\{u, v\} \in S' $ that dominates at least $ \frac{\binom{k-3}{2}\binom{n}{k}}{\binom{n}{2}}$ subsets of~$ S $.

    For each of these subsets, every geodesic disk through $u$ and $v$ contains at least one of the other $k-2$ points of $ S' $. Since a point of $S\setminus\{u, v\}$ can be in at most $ \binom{n-3}{k-3} $ subsets that include $u$ and $v$, every geodesic disk through $u$ and $ v $ contains at least
    \[\frac{\binom{k-3}{2}\binom{n}{k}}{\binom{n}{2}\binom{n-3}{k-3}} = (n-2)\frac{(k-3)(k-4)}{k(k-1)(k-2)}\]
    other points of $S$. The largest coefficient of $n-2$ is~$\frac{5}{84}$, obtained for $k=8$ and for $k=9$. 
\end{proof}

We now present our strongest lower bound for $\Pi(n)$ by generalizing the approach by Edelsbrunner et al.~\cite{edelsbrunner1989circles} to the geodesic setting. This generalization is made by using order-$k$ geodesic Voronoi diagrams. An \emph{order-$k$ geodesic Voronoi diagram} of a set $ S $ in a polygon $ P $ is the partitioning of $ P $ into cells associated with all subsets $ K \subseteq S $ of size $ k $, such that each cell $ V(K) $ is defined as $ V(K) = \{x \in P \mid \max_{s\in K} g(x,s) \leq \min_{t \in S\setminus K} g(x,t)\} $, i.e., a partition of a polygon into regions such that all points in a region share the same $k$ geodesically nearest sites.

\begin{theorem}
    \label{thm:noverfive}
    For any $n\geq 6$, $\Pi(n) \ge \left\lceil \frac{n}{5} \right\rceil +1 $.
\end{theorem}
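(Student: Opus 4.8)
The plan is to adapt the Euclidean argument of Edelsbrunner et al.~\cite{edelsbrunner1989circles}, which couples the order-$k$ Voronoi diagram with an Euler-formula / planarity count, replacing every metric ingredient by its geodesic analogue established in Sections~\ref{sec:comparison}. Concretely, set $k = \lceil n/5 \rceil + 1$ and suppose for contradiction that for every pair $u,v\in S$ there is a geodesic disk through $u,v$ containing at most $k-1$ points of $S$. The orientation of the intersection graph $\overrightarrow{G}(S)$ from Section~\ref{sec:every} is then well-defined by Lemma~\ref{lem:quadrilateral}, and the contradiction hypothesis says every vertex $g(u,v)$ has bounded out-degree. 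The goal is to show that this forces $|E(G(S))|$ to be smaller than the lower bound on the intersection number $I(S)$ that one gets from a careful geodesic Voronoi/planarity count, which is impossible.

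\textbf{Key steps, in order.} First I would recall the structure of the geodesic order-$j$ Voronoi diagram $V_j(S)$ inside $P$: its cells are in bijection with the $j$-subsets of $S$ that are ``closest'' to some point, its edges lie on bisectors $b(u,v)$ (finite hyperbolic arcs by~\cite{aronov1987geodesic}), and each interior vertex is where three bisectors meet, i.e. a point equidistant from three points of $S$ — equivalently the center of a geodesic disk through those three points. Second, I would set up the counting identity relating the number of pairs $\{u,v\}$ such that a geodesic disk through $u,v$ contains at most $j-1$ points of $S$ (the ``$j$-sets'' realized as Voronoi vertices / edges) to the combinatorial size of $V_j(S)$, and dualize: a pair $g(u,v)$ with small out-degree in $\overrightarrow{G}(S)$ corresponds to an edge of $V_j(S)$ for some small $j$. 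Third, I would apply Euler's formula to $V_j(S)$, or more directly the planarity argument of Lemma~\ref{lem:hayward} localized to the ``$\le(k-1)$-point'' pairs: the sub-graph of $G(S)$ whose edges $g(u,v)$ are such that \emph{some} geodesic disk through $u,v$ misses all but at most $k-2$ other points is planar (by Lemma~\ref{lem:quadrilateral}, exactly as in Lemma~\ref{lem:hayward}), hence has at most $3n-6$ edges — and this must be \emph{all} of $G(S)$ under the contradiction hypothesis, since every pair has a disk with at most $k-1 \le$ (a suitable bound) interior points. Running the Neumann-Lara–Urrutia-style count $I(S) = |E(G(S))| \ge \binom{n}{5}/(n-4)$ from Theorem~\ref{thm:urrutia} against $|E(G(S))| \le 3n-6$ would already be too weak; instead the Edelsbrunner refinement is to bound $|E(G(S))|$ from below by $\Theta(n k)$ via the Voronoi-diagram edge count for diagrams of orders up to $k$, and from above by something forcing $k \ge n/5 + 1$. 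Fourth, I would collect the constants so the coefficient comes out to exactly $1/5$ and verify the $+1$ and the hypothesis $n \ge 21$ are what make the ceilings line up.

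\textbf{The main obstacle.} As the authors themselves flag at the end of Section~\ref{sec:comparison}, the exact edge-count formula for the Euclidean order-$k$ Voronoi diagram (Lee~\cite{lee1982k}) — which is what makes the Euclidean bound come out to $n/4.7$ — has \emph{no known geodesic analogue}; only upper bounds on the complexity of geodesic order-$k$ Voronoi diagrams are known~\cite{bohler2015complexity}. So the hard part is to salvage the argument using only a one-sided (upper) bound on the number of Voronoi edges together with the lower bound on $I(S)$ coming from Kuratowski/planarity, and to check that this weaker input still yields the clean $\lceil n/5\rceil+1$ rather than something messier. I expect the proof will route around the missing exact formula by arguing about the \emph{levels} of an arrangement of bisectors (or of the lifted picture's geodesic replacement) and using that each point of $S$ is the ``site'' of a cell at each level, so that a pair with out-degree $< k-1$ in $\overrightarrow{G}(S)$ contributes a bounded number of edges across the first $k$ levels; summing over all $\binom n2$ pairs and comparing with the total number of level-edges (which is $\Theta(nk)$ by the planarity bound applied level-by-level) pins down $k$. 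A secondary technical point to watch is that a geodesic disk through three points need not exist, so some Voronoi vertices present in the Euclidean case are simply absent here — this only helps (fewer vertices, hence the diagram is ``simpler''), but one must make sure no step of the argument silently assumed such a vertex exists. Finally, the extra general-position condition forbidding $y \in g(x,p)$ for a reflex vertex $p$ must be invoked so that extension paths $\ell(u,v)$ behave and the half-polygons $P^+,P^-$ are well-defined throughout.
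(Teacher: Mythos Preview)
Your plan conflates two distinct machines. The intersection graph $\overrightarrow{G}(S)$ with its Kuratowski/planarity lower bound on $I(S)$ is the engine behind Theorems~\ref{thm:urrutia} and~\ref{thm:hayward}, and it tops out at roughly $n/16.8$; it plays \emph{no role} in the $n/5$ bound. The paper's proof of Theorem~\ref{thm:noverfive} is a direct pigeonhole on bisectors, with no contradiction hypothesis, no $G(S)$, and no out-degrees. Your attempt to route the argument through ``bounding $|E(G(S))|$ from below by $\Theta(nk)$'' is the wrong object: $|E(G(S))|=I(S)$ is a fixed number independent of $k$, and no comparison with it yields the result.

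The actual structure is this. For each bisector $b(p,q)$, the edges of the order-$1$ through order-$k$ geodesic Voronoi diagrams lying on $b(p,q)$ form some number of connected components; let $\lambda_k$ be the total over all $\binom{n}{2}$ bisectors. If $\lambda_k < \binom{n}{2}$ then some bisector carries no edge of any order-$j$ diagram with $j\le k$, so every disk through that pair contains at least $k$ interior points, and we are done. The work is an \emph{upper} bound on $\lambda_k$. A parity/telescoping argument, tracking how adjacent pieces on a bisector (with $\rho$-values differing by $\pm1$) pair up, gives $\lambda_k = \sum_{i=1}^k (-1)^{k-i} e_i + (S_{k-1}+S_{k-3}+\cdots)$, where $e_j$ is the edge count of the order-$j$ diagram and $S_j$ the number of its edges meeting $\partial P$.

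On the obstacle you correctly flagged: the workaround is not a one-sided upper bound on Voronoi complexity. The abstract-Voronoi framework~\cite{bohler2015complexity} supplies the \emph{exact} relation $e_j = 6jn - 3j^2 - 3n - 3\sum_{i<j}S_i - S_j$, valid in the geodesic setting after cropping by $\partial P$. Substituting and telescoping yields $\lambda_k = 3kn - \tfrac{3}{2}k^2 - \tfrac{3}{2}k - \sum_{i=1}^k S_i$. What is now needed is a \emph{lower} bound on $\sum_{i\le k} S_i$, and \cite{bohler2015complexity} gives $\sum_{i=1}^k S_i \ge k(k+1)$, whence $\lambda_k \le 3kn - \tfrac{5}{2}k^2 - \tfrac{5}{2}k$, which is $<\binom{n}{2}$ exactly when $k < n/5$. (The Euclidean $n/4.7$ comes from the sharper $\sum S_i \ge \tfrac{3}{2}k(k+1)$ for unbounded edges, unavailable geodesically; this is the separation you were looking for.) None of this is visible from the intersection-graph side.
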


\begin{proof}
    Unless specified otherwise, in this proof, order-$k$ diagrams refer to order-$k$ geodesic Voronoi diagrams. Recall that a geodesic disk through two points containing $k-1$ points in its interior has its center on an edge of the order-$k$ diagram. Consider the set of all disks that pass through two points $p,q \in S$ in the polygon. These disks are all centered on the bisector $b(p,q)$, and there are at most $n - 2$ points on the bisector where a third point of $S$ is on the boundary of the disk. Since there is at most one geodesic disk through any three points, this partitions the bisector into {\em at most} $n - 1$ pieces such that any disk through $p$ and $q$ centered on any point belonging to a fixed piece $s$ contains in its interior the same number of points $\rho(s)$. The pieces $s$ such that $\rho(s) = k - 1$ are precisely the edges of the order-$k$ diagram.
    
    For some fixed bisector $b(p,q)$, let $ b_k(p,q) $ be the union of the edges of the order-$1$ through order-$k$ diagrams on $b(p,q)$. The edges of $ b_k(p,q) $ form a number of connected components, where a \emph{connected component} is a maximal connected subset of a bisector that is the union of the closures of the edges of the order-$1$ through order-$k$ diagrams. Note that the number of connected components is strictly less than the total number of edges of all the diagram. Let $\lambda_k $ be the sum of the number of connected components formed by the edges of $ b_k(p,q) $ over \emph{all} pairs $ p,q\in S $. The goal is to find the largest value of $k$ such that $\lambda_k < \binom{n}{2}$. Every bisector is completely covered by a single connected component when $k = n-2$. When $\lambda_k < \binom{n}{2}$, by the pigeonhole principle there exists a bisector $b(u,v)$ that does not have any edges of the order-$1$ through order-$k$ diagrams. Therefore, every geodesic disk with $ u $ and $ v $ on its boundary must contain in its interior at least $k$ points. 

    We derive a formula for $ \lambda_k$ in terms of the number of edges in the order-$1$ to the order-$k$ diagram by considering how the pieces must ``pair up'', in the sense that a piece $s$ with $\rho(s) = i$ can only be adjacent to pieces $ s' $ of the same bisector with $\rho(s') = i \pm 1$. If a piece $ s $ with $ \rho(s) = k-1 $ is adjacent to two pieces $ s_1 $ and $ s_2 $ with $ \rho(s_1) = \rho(s_2) = k-2 $, the number of connected components is decreased by one. If $ s $ is adjacent to only one piece $ s_1 $ with $ \rho(s_1) = k-2 $, it extends an existing component. Otherwise, if it is not adjacent to any piece $ s_1 $ with $ \rho(s_1) = k-2 $, it creates a new connected component. Let $ \epsilon_a, \epsilon_b $ and $ \epsilon_c $ be, respectively, the number of pieces $ s $ with $ \rho(s) = k-1 $ of the first, second, and third type. Then we have 
    \begin{equation}
        \label{eq:lambk}
        \lambda_k = \lambda_{k-1}+\epsilon_c-\epsilon_a.
    \end{equation}
    
    Let $ e_k $ be the number of edges in an order-$k$ diagram and let $S_k$ be the number of intersections between the order-$ k $ diagram and the boundary of the polygon. The edges intersecting the boundary are analogous to the unbounded edges in the Euclidean setting. {Therefore, in this proof we use the term \textit{endpoint} to denote Voronoi vertices of the order-$k$ diagrams, not on the polygon boundary. Hence,} the total number of endpoints of all order-$ k $ edges is given by $ 2e_k - S_k $ since the edges that intersect the boundary of the polygon only contribute one endpoint. Moreover, we note that $ e_k = \epsilon_a+\epsilon_b+\epsilon_c $.
    
     Let $\phi_i$ be the number of unpaired endpoints in $\lambda_i$. Consider all the unpaired endpoints of the connected components of $\lambda_{k-1}$. These endpoints can only come from edges in the order-${k-1}$ diagram since all edges of the order-$i$ diagrams for $i\in \{1,\ldots, k-2\}$ are all paired. Therefore, these unpaired endpoints can only pair up with edges from the order-$k$ diagram, which implies that $ \phi_{k-1} = 2\epsilon_a+\epsilon_b $. Thus, from Equation (\ref{eq:lambk}) and the fact that $e_k=\epsilon_a+\epsilon_b+\epsilon_c$, we have that $ \lambda_k = \lambda_{k-1} + e_k -\phi_{k-1} $. If we iterate over $ k $, we get that $ \lambda_k = (e_k-\phi_{k-1}) + (e_{k-1}-\phi_{k-2}) + \ldots + (e_2-\phi_{1}) + e_1 = \sum_{i=1}^k e_i - \sum_{i=1}^{k-1} \phi_i $. Note that $\phi_{i-1}$ is the number of unpaired endpoints in $\lambda_{i-1}$, which are the paired endpoints of $\lambda_i$. Therefore, $\phi_i+ \phi_{i-1} = 2e_i-S_i $--the number of endpoints in order-$i$ diagrams. We use this to rewrite the equation for $\lambda_k$ in terms of $e_i$ and $S_i$.
    \begin{align*}
        \lambda_k & = \sum_{i=1}^k e_i - [(\phi_{k-1}+\phi_{k-2})+(\phi_{k-3}+\phi_{k-4})+\ldots] \mbox{ (end of sum depends on parity of $k$)} \\
        & = \sum_{i=1}^k e_i - [(2e_{k-1}-S_{k-1})+(2e_{k-3}-S_{k-3})+\ldots+(2e_{2\mbox{-}or\mbox{-}1}-S_{2\mbox{-}or\mbox{-}1})] \mbox{ (depending on parity of $k$)}\\
        & =\sum_{i=1}^k (-1)^{k-i}e_i + \left(S_{k-1}+S_{k-3}+\ldots+S_{2\mbox{-}or\mbox{-}1}\right) \mbox{ ($S_2$ or $S_1$ depending on parity of $k$).}
    \end{align*}
    
    To complete the bound on $\lambda_k$, we need to bound the size of the order-$j$ diagrams. Exact bounds on the size of order-$j$ Euclidean Voronoi diagrams are known but do not apply since unlike the Euclidean case, the geodesic order-$j$ Voronoi diagram has no unbounded edges. We can, however, establish upper and lower bounds using abstract Voronoi diagrams~\cite{Klein89abstractvoronoi}: Geodesic bisectors fulfill all the axioms for an abstract Voronoi diagram except for the property of being unbounded, which in the geodesic setting corresponds to hitting the boundary of the polygon. We can crop this diagram with a closed Jordan curve that encloses all the intersections between bisectors (as is the case for the polygon and the geodesic bisectors) to have an abstract Voronoi diagram~\cite{Papa23}. It is this cropping that no longer allows us to have an exact formula. Since an order-$j$ geodesic Voronoi diagram is an order-$j $ abstract Voronoi diagram, we have that the number of edges $ e_j = 3(F_j-1) - S_j $, where $ F_j = 2jn-j^2-n+1-\sum_{i=1}^{j-1}S_i $ is the number of faces of an order-$j$ geodesic Voronoi diagram, see~\cite[Lemmas~$ 15 $ and $ 16 $]{bohler2015complexity}. Hence, $ e_j = 6jn - 3j^2 - 3n -3\sum_{i=1}^{j-1}S_i - S_j $. Then we get
    \begin{equation}
        \label{eq:edels1}
        e_j - e_{j-1} = 6n-6j+3-2S_{j-1}-S_j.
    \end{equation}
    
    In the final calculation, as in the proof of the lower bound in~\cite{edelsbrunner1989circles}, we distinguish the case that $ k $ is even from $ k $ odd. However, in both cases we have that
    \begin{equation*}
        \lambda_k = 3kn-\frac{3}{2}k^2- \frac{3}{2}k-\sum_{i=1}^k S_i.
    \end{equation*}
    
    We use the fact that $j(j+1) \le \sum_{i=1}^jS_i \le j(2n-j-1)$, see~\cite[Lemma~$ 11 $]{bohler2015complexity}, which implies that 
    \begin{equation*}
        \lambda_k \leq 3kn-\frac{3}{2}k^2- \frac{3}{2}k-k(k+1) = 3kn-\frac{5}{2}k^2- \frac{5}{2}k,
    \end{equation*}
    
    which is a quadratic polynomial in $k$. We note that $\lambda_k <\binom{n}{2}$ whenever $k < \frac{n}{5} $. Thus if we set $k = \left\lceil \frac{n}{5} \right\rceil -1 $, the pigeonhole principle implies that every geodesic disk with $ {u} $ and~$ {v} $ on its boundary must contain in its interior at least $k$ points. Thus, we get that $\Pi(n) \geq {k+2}= \left\lceil\frac{n}{5}\right\rceil +1 $, with the plus two accounting for $ {u} $ and $ {v} $.
\end{proof}

\begin{rem}
    The number of unbounded edges of all order-$ i $ geodesic Voronoi diagrams for all $ i \in \{1, \ldots, j\} $ is at least $ j(j+1) $~\cite{bohler2015complexity}, while the number of unbounded edges of all Euclidean order-$ i $ Voronoi diagrams for all $ i \in \{1, \ldots, j\} $ is at least $ \frac{3}{2}j(j+1) $~\cite{edelsbrunner1989circles}. Thus, the bound obtained in Theorem~\ref{thm:noverfive} is weaker than the one in~\cite{edelsbrunner1989circles}.
\end{rem}

\subsection{A Constant Fraction of Points Inside and Outside Geodesic Disks}

In this section we show that for a point set $ S $, there is a pair of points $ x, y \in S $ such that any disk through them contains a constant fraction of points of $ S $, both inside and outside the disks. We obtain {the first upper} bound by modifying the previous proof of Theorem~\ref{thm:hayward} by using Lemma~\ref{lem:hayward2}. {The result is analogous to Lemma~\ref{lem:hayward} but,} for completeness, we add here the proof of the bound.

\begin{lemma}
    \label{lem:hayward2}
    Among any set $ S' $ of $ k $ points in a polygon, there are at least $ \binom{k-3}{2} $ pairs of points $ x,y \in S' $ such that every geodesic disk through $ x $ and $ y $ leaves outside at least one of the other $ k-2 $ points of $ S' $.
\end{lemma}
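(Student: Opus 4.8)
The plan is to mirror the proof of Lemma~\ref{lem:hayward} verbatim, replacing the notion ``every geodesic disk through a pair contains one of the remaining points'' by its complement ``every geodesic disk through a pair leaves one of the remaining points outside'', and invoking Lemma~\ref{lem:quadrilateral2} in place of Lemma~\ref{lem:quadrilateral}. Concretely, I would define the graph $G=(V,E)$ with $V=S'$, joining $u$ and $v$ by an edge exactly when \emph{some} geodesic disk through $u$ and $v$ contains all of the other $k-2$ points of $S'$ in its interior (i.e.\ the pair $\{u,v\}$ \emph{fails} the desired property), and drawing the edge $uv$ as the geodesic path $g(u,v)$.

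The core step is to show that $G$ drawn this way is plane. Suppose two edges $g(u,v)$ and $g(x,y)$ intersect, for $u,v,x,y\in S'$. Since $u,v,x,y$ are in general position and the paths intersect, Lemma~\ref{lem:quadrilateral2} applies: either every geodesic disk with $u,v$ on its boundary fails to contain at least one endpoint of $g(x,y)$, or every geodesic disk with $x,y$ on its boundary fails to contain at least one endpoint of $g(u,v)$. In the first case, since $x$ and $y$ are among the other $k-2$ points of $S'$, no geodesic disk through $u$ and $v$ can contain all of them, contradicting $uv\in E$; the second case is symmetric and contradicts $xy\in E$. Hence no two edges of $G$ cross, so $G$ is a planar graph on $k$ vertices, and Euler's formula gives $|E|\le 3k-6$.

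Finishing is routine: the number of pairs of $S'$ \emph{not} in $E$ is at least $\binom{k}{2}-(3k-6)=\binom{k-3}{2}$, and by construction every such pair $\{x,y\}$ has the property that every geodesic disk through $x$ and $y$ leaves at least one of the other $k-2$ points of $S'$ outside, which is exactly the claim. I do not expect a real obstacle here: the only point requiring care is that the crossing-pair argument correctly uses the general-position hypothesis (so that Lemma~\ref{lem:quadrilateral2} is applicable) and that the two endpoints of the ``other'' path are genuinely distinct from $u,v$ and hence count among the $k-2$ remaining points; both are immediate. Everything else is the same planarity-plus-Euler counting already used for Lemma~\ref{lem:hayward}.
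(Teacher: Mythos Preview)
Your proposal is correct and follows essentially the same approach as the paper: define the ``bad'' pairs as edges, use Lemma~\ref{lem:quadrilateral2} to rule out crossings, apply Euler's formula, and count the complement. In fact your write-up is cleaner than the paper's own proof, which contains two evident typos (the edge condition should read ``contains \emph{all} of the other $k-2$ points'', and the cited lemma should be Lemma~\ref{lem:quadrilateral2} rather than Lemma~\ref{lem:quadrilateral}); your version states both correctly.
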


\begin{proof}
    Let $ G = (V,E) $ be the graph where $ V\!=\!S' $, and two vertices are joined by an edge if at least one of the geodesic disks through them contains {all} of the other $ k-2 $ points of $ S' $. The edge between $ u $ and $ v $ is drawn as the geodesic path $ g(u,v) $. We claim that~$ G $ is planar, and the edges can be drawn without crossings inside $P$. Suppose there are two edges $ g(u,v), g(x,y) $ in $ G $, for $ u,v,x,y \in S' $, that intersect. By Lemma~\ref{lem:quadrilateral2}, one of the edges has the property that every geodesic disk through both of its endpoints leaves outside at least one of the endpoints of the other {edge}, which contradicts the definition of the edges of $ G $. Thus, $ G $ is a planar graph with $ k $ vertices.
    By Euler's formula, $ \lvert E\rvert \leq 3k-6 $, so $ S' $ contains at least $ \binom{k}{2} - (3k-6) = \binom{k-3}{2} $ pairs that satisfy the property.
\end{proof}

\begin{theorem}
    \label{thm:hayward2}
    For any $ n \ge 21$, $ \Pi^{in\mbox{-}out}(n) \geq \left \lceil\frac{16}{665}(n-2) \right \rceil \approx \left \lceil\frac{n-2}{41.6} \right \rceil\approx 0.024n $.
\end{theorem}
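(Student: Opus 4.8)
The plan is to run the counting argument of Theorem~\ref{thm:hayward} with Lemma~\ref{lem:hayward} and Lemma~\ref{lem:hayward2} used simultaneously. Say that a pair $u,v$ in a $k$-element subset $S'\subseteq S$ is \emph{balanced} for $S'$ if every geodesic disk through $u$ and $v$ contains at least one of the other $k-2$ points of $S'$ in its interior \emph{and} leaves at least one of the other $k-2$ points of $S'$ outside. By Lemma~\ref{lem:hayward} at least $\binom{k-3}{2}$ pairs of $S'$ satisfy the first requirement, and by Lemma~\ref{lem:hayward2} at least $\binom{k-3}{2}$ pairs satisfy the second; since $S'$ has only $\binom{k}{2}$ pairs in total, inclusion--exclusion yields at least
\[
2\binom{k-3}{2}-\binom{k}{2}=\frac{k^2-13k+24}{2}
\]
pairs that are balanced for $S'$, a positive quantity once $k\ge 11$.

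Next I would average exactly as in the proof of Theorem~\ref{thm:hayward}: summing the number of balanced pairs over all $\binom{n}{k}$ subsets of size $k$ and dividing by the $\binom{n}{2}$ pairs of $S$, there is a fixed pair $\{u,v\}$ that is balanced for at least $m:=\big(2\binom{k-3}{2}-\binom{k}{2}\big)\binom{n}{k}/\binom{n}{2}$ of those subsets. Now fix any geodesic disk $D$ through $u$ and $v$ and let $I$ (resp.\ $O$) be the set of points of $S\setminus\{u,v\}$ strictly inside (resp.\ outside) $D$, so $|I|+|O|=n-2$. Every balanced subset contains, besides $u$ and $v$, at least one point of $I$; since each point of $S$ lies in at most $\binom{n-3}{k-3}$ of the $k$-subsets containing $u$ and $v$, a union bound gives $m\le|I|\binom{n-3}{k-3}$, hence $|I|\ge m/\binom{n-3}{k-3}$. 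The identical argument with $O$ in place of $I$ gives $|O|\ge m/\binom{n-3}{k-3}$.

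Using the routine identity $\binom{n}{k}/\big(\binom{n}{2}\binom{n-3}{k-3}\big)=2(n-2)/\big(k(k-1)(k-2)\big)$, both $|I|$ and $|O|$ are therefore at least $(n-2)(k^2-13k+24)/\big(k(k-1)(k-2)\big)$. It remains to choose the integer $k$ maximizing this coefficient; a short calculation shows the maximum is attained at $k=21$, where the coefficient equals $192/7980=16/665$ — and here the hypothesis $n\ge 21$ is exactly what is needed so that $\binom{n}{21}\ne 0$. Taking ceilings gives $\Pi^{in\mbox{-}out}(n)\ge\lceil\frac{16}{665}(n-2)\rceil$. I do not expect a genuine obstacle: the only new structural ingredient is Lemma~\ref{lem:hayward2}, and the single point requiring care is that the inclusion--exclusion loss roughly doubles the optimal block size compared with Theorem~\ref{thm:hayward} ($k=8$ there, $k=21$ here), which is precisely what degrades the fraction from about $\frac{1}{16.8}$ to about $\frac{1}{41.6}$.
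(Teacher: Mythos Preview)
Your proposal is correct and follows essentially the same approach as the paper's proof: combine Lemmas~\ref{lem:hayward} and~\ref{lem:hayward2} via inclusion--exclusion to guarantee at least $2\binom{k-3}{2}-\binom{k}{2}$ ``balanced'' pairs in each $k$-subset, average over all $k$-subsets, and then bound $|I|$ and $|O|$ separately using the $\binom{n-3}{k-3}$ overcount, optimizing at $k=21$. Your write-up of the union bound for $|I|$ and $|O|$ is in fact a bit more explicit than the paper's.
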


\begin{proof}
    {Let $S'$ be a subset of $S$ of size $k$. Recall from the proof of Theorem~\ref{thm:hayward} that we say two points of $S'$ \emph{dominate} $S'$ if every geodesic disk through them contains at least one of the other $k-2$ points of $S'$.}
    In the same spirit, we define two points in $S'$ to be \emph{dominated by}~$S'$ if every geodesic disk through them leaves outside at least one of the other $ k-2 $ points of $ S' $.

    {There are $ \binom{n}{k} $ subsets of size $ k $ in $ P $.} Among any of these sets there are at most  $ \binom{k}{2} - \binom{k-3}{2} = 3k-6 $ pairs of points that do not satisfy the property of Lemma~\ref{lem:hayward} and at most $ 3k-6 $ pairs of points (possibly different than the previous pairs) that do not satisfy the property of Lemma~\ref{lem:hayward2}. {Since there are $ \binom{n}{2} $ pairs of points in $ S $ and,} for $ k \geq 11 $, it holds that $ 2(3k-6) < \binom{k}{2} $, there is at least one pair of points $ u, v \in S' $ that satisfies both properties, hence dominates and is dominated by at least $ {\frac{\left[\binom{k}{2}-2(3k-6)\right]\binom{n}{k}}{\binom{n}{2}}} $ subsets of~$ S $. For each of these subsets, every geodesic disk through $u$ and $v$ has at least one of the other $k-2$ points of $ S' $, both inside and outside. Since a point of $S\setminus\{u, v\}$ could be in at most $ \binom{n-3}{k-3} $ subsets containing $u$ and $v$, every geodesic disk through $u$ and $ v $ has at least
    \begin{equation*}
        \frac{\left[\binom{k}{2}-2(3k-6)\right]\binom{n}{k}}{\binom{n}{2}\binom{n-3}{k-3}} = (n-2)\frac{2(k-3)(k-4)-k(k-1)}{k(k-1)(k-2)}
    \end{equation*}
    
    other points of $S$, both inside and outside. The largest coefficient {of $\frac{16}{665}$} for $ n-2 $ is obtained when $k=21$.
\end{proof}

{Theorem~\ref{thm:hayward2} illustrates how the argument of~\cite{hayward1989note} can be adapted to the geodesic setting. We improve the result by adapting the arguments from Theorem~\ref{thm:noverfive}.}

\begin{theorem}
    \label{thm:noverfive2}
    For any $n\geq 14$, $\Pi^{in-out}(n) \ge \frac{7-\sqrt{37}}{12}n+O(1)\approx \frac{n}{13.08} $.
\end{theorem}

\begin{proof}
Now consider some fixed bisector $ b(p,q) ${, and let $ b_k(p,q) $ be the union of the edges of the order-$1$ through order-$k$ diagrams on $b(p,q)$}. As in the proof of Theorem~\ref{thm:noverfive}, the edges of $b_k(p,q)$ form a number of connected components (which is strictly less than the total number of edges of all the diagrams). Let $\lambda_{k} $ be the {sum of the} number of connected components formed by the {edges of $ b_k(p,q) $ over \emph{all} pairs $ p,q\in S $}. Analogously, {let $ \overline{b}_k(p,q) $ be the union of the edges of the order-$(n-k)$ through order-$(n-1)$ diagrams on $b(p,q)$. The edges of $ \overline{b}_k(p,q)$ form a number of connected components}. Let $ \overline{\lambda}_{n-k}$ be the {sum of the} number of these connected components formed by the {edges of $ \overline{b}_k(p,q) $ over \emph{all} pairs $ p,q\in S $}. The goal is to find the largest value of~$k$ such that $\lambda_{k} + \overline{\lambda}_{n-k} < \binom{n}{2}$. When $\lambda_k + \overline{\lambda}_{n-k} < \binom{n}{2}$, we note that there exists a bisector $b(p,q)$ that does not have any edges of the order-$1$ through order-$ k $ diagrams, and of the order-$(n-k)$ through order-$(n-1)$ diagrams. Therefore, every geodesic disk with $p$ and $q$ on its boundary has at least $k$ points, both inside and outside the disks.

We already have an upper bound on $\lambda_k$ from the proof of Theorem~\ref{thm:noverfive}, so we will now provide a similar upper bound on $\overline\lambda_{n-k}$. {As in the proof of Theorem~\ref{thm:noverfive}, we use the term endpoint to denote Voronoi vertices of the order-$k$ diagrams, not on the polygon boundary.} We define $\overline\phi_{n-k}$ to be the number of unpaired endpoints after putting down all the edges of the order-$(n-1)$ through the order-$(n-k)$ geodesic Voronoi diagrams. One can follow similar reasoning to the proof of Theorem~\ref{thm:noverfive} to arrive at the formula
\begin{equation*}
    \overline\lambda_{n-k} = \sum_{j=1}^k e_{n-j} - \sum_{j=1}^{k-1} \overline\phi_{n-j}.
\end{equation*}

We also have $\overline\phi_{n-k}+\overline\phi_{n-k+1}=2e_{n-k} - S_{n-k}$. Note that $\overline\phi_{n-1}=2e_{n-1}-S_{n-1}$ since if we only put down edges of the order-$(n-1)$ diagram, every endpoint will be unpaired, and $2e_{n-1}-S_{n-1}$ is the number of endpoints in the order-$(n-1)$ diagram.
Finally, recall that Bohler et al.~\cite[Lemma 11]{bohler2015complexity} proved $j(j+1)\le\sum_{i=1}^j S_i\le j(2n-j-1)$.

As in the proof of Theorem~\ref{thm:noverfive}, we consider the cases where $k$ is even or odd separately. When $k$ is even, we have
\begin{align*}
\overline\lambda_{n-k}
 &= \sum_{j=1}^k e_{n-j} - \sum_{j=1}^{k-1} \overline\phi_{n-j} \\
 &= \sum_{j=1}^k e_{n-j}
	- [(\overline\phi_{n-k+1} + \overline\phi_{n-k+2})
		+ (\overline\phi_{n-k+3} + \overline\phi_{n-k+4})
		+ \ldots
		+ (\overline\phi_{n-3} + \overline\phi_{n-2}) + \overline\phi_{n-1}] \\
 &= \sum_{j=1}^k e_{n-j}
	- [(2e_{n-k+1}-S_{n-k+1})+(2e_{n-k+3}-S_{n-k+3})+\ldots+(2e_{n-1}-S_{n-1})] \\
 &= \sum_{j=1}^k (-1)^{k-j}e_{n-j}
	+ (S_{n-k+1}+S_{n-k+3}+\ldots+S_{n-1}) \\
 &= -\sum_{j=1}^{k/2}(e_{n-2j+1}-e_{n-2j})
	+ \sum_{j=1}^{k/2}S_{n-2j+1} \\
 &= -\sum_{j=1}^{k/2}(6n-6(n-2j+1)+3-S_{n-2j+1}-2S_{n-2j})
	+ \sum_{j=1}^{k/2}S_{n-2j+1} \\
 &= -\sum_{j=1}^{k/2}(6n-6(n-2j+1)+3)
	+ 2\sum_{j=1}^{k/2}(S_{n-2j+1}+S_{n-2j}) \\
 &= -\sum_{j=1}^{k/2}(12j-3)
	+ 2\sum_{j=1}^{k}S_{n-j} \\
 &= -\frac{3}{2}k(k+1)
	+ 2\bigg(\sum_{j=1}^{n-1}S_j - \sum_{j=1}^{n-k-1}S_j\bigg) \\
 &\le -\frac{3}{2}k(k+1)
	+ 2\big(n(n-1) - (n-k)(n-k-1)\big) \\
 &= 4nk - \frac{7}{2}k^2 - \frac{7}{2}k.
\end{align*}

And when $k$ is odd we have
\begin{align*}
\overline\lambda_{n-k}
 &= \sum_{j=1}^k e_{n-j} - \sum_{j=1}^{k-1} \overline\phi_{n-j} \\
 &= \sum_{j=1}^k e_{n-j}
	- [(\overline\phi_{n-k+1} + \overline\phi_{n-k+2})
		+ (\overline\phi_{n-k+3} + \overline\phi_{n-k+4})
		+ \ldots
		+ (\overline\phi_{n-2} + \overline\phi_{n-1})] \\
 &= \sum_{j=1}^k e_{n-j}
	- [(2e_{n-k+1}-S_{n-k+1})+(2e_{n-k+3}-S_{n-k+3})+\ldots+(2e_{n-2}-S_{n-2})] \\
 &= \sum_{j=1}^k (-1)^{k-j}e_{n-j}
	+ (S_{n-k+1}+S_{n-k+3}+\ldots+S_{n-2}) \\
 &= \sum_{j=1}^{(k-1)/2}(e_{n-2j+1}-e_{n-2j})
	+ e_{n-k}
	+ \sum_{j=1}^{(k-1)/2}S_{n-2j} \\
 &= \sum_{j=1}^{(k-1)/2}(6n-6(n-2j+1)+3-S_{n-2j+1}-2S_{n-2j})
	+ e_{n-k}
	+ \sum_{j=1}^{(k-1)/2}S_{n-2j} \\
 &= \sum_{j=1}^{(k-1)/2}(6n-6(n-2j+1)+3)
	+ e_{n-k}
	- \sum_{j=1}^{(k-1)/2} (S_{n-2j+1}+S_{n-2j}) \\
 &= \sum_{j=1}^{(k-1)/2}(12j-3)
	+ e_{n-k}
	- \sum_{j=1}^{k-1} S_{n-j} \\
 &= \frac{3}{2}k(k-1)
	+ e_{n-k}
	- \sum_{j=1}^{k-1} S_{n-j} \\
 &= \frac{3}{2}k(k-1)
	+ (6(n-k)-3)n-3(n-k)^2-S_{n-k}-3\sum_{j=1}^{n-k-1}S_j
	- \sum_{j=1}^{k-1} S_{n-j} \\
 &= \frac{3}{2}k(k-1)
	+ (6(n-k)-3)n-3(n-k)^2
	- \sum_{j=1}^{n-1} S_j - 2\sum_{j=1}^{n-k-1} S_j \\
 &\le \frac{3}{2}k(k-1)
	+ (6(n-k)-3)n-3(n-k)^2
	- n(n-1) - 2(n-k-1)(n-k) \\
 &= 4nk - \frac{7}{2}k^2 - \frac{7}{2}k.
\end{align*}

In both cases we have an upper bound of $\overline\lambda_{n-k}\le 4nk - \frac{7}{2}k^2 - \frac{7}{2}k$.
We can combine this with our upper bound of $\lambda_k \le 3nk - \frac{5}{2}k^2 - \frac{5}{2}k$ from the proof of Theorem~\ref{thm:noverfive} to get
\begin{equation*}
    \lambda_k + \overline\lambda_{n-k} \le 7nk - 6k^2 - 6k.
\end{equation*}

This expression is quadratic in $k$, so it is straightforward to check that $7nk - 6k^2 - 6k < \binom{n}{2}$ whenever
\begin{equation*}
    k < \frac{7n-6-\sqrt{37n^2 - 72n + 36}}{12} = \bigg(\frac{7-\sqrt{37}}{12}\bigg)n + O(1) \approx \frac{1}{13.08}n.
\end{equation*}

Thus, for any $k$ that is less than approximately $ \frac{n}{13.08}$, there exists a pair of points $\{p,q\}$ such that any geodesic disk with $p$ and $q$ on its boundary will contain at least $k$ points and at most $n-k$ points.
\end{proof}

\subsection{Sets in geodesically convex position}

In the Euclidean setting, Hayward et al.~\cite{hayward1989some} proved that when points are in convex position, for every disk containing two points on {its} boundary, there always exists a pair of points such that any disk containing that pair contains at least  $\left\lceil \frac{n}{3}\right\rceil + 1 $ {points}. We generalize that result to the geodesic setting by showing that $ \overline{\Pi}(n) \geq \left\lceil \frac{n}{3}\right\rceil + 1 $. The geodesic case poses its unique challenges, for example, it is not always possible to find an enclosing geodesic disk {(i.e., containing the set~$S$)} with three points on the boundary. However, when no such enclosing geodesic disk exists, we show that there exists a pair of points such that every geodesic disk through them is an enclosing geodesic disk.

\begin{lemma}
    \label{prop:exist1}
    Let $ S $ be a set of points in a polygon $ P $. If there is no enclosing geodesic disk through three points of $S$, then there exist two points of $S$ such that any geodesic disk through them is an enclosing geodesic disk.
\end{lemma}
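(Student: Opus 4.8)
The plan is to analyze the geodesic farthest-point Voronoi diagram $\mathrm{FVD}(S)$ of $S$ in $P$, whose basic combinatorial structure is available from~\cite{aronov1993furthest}. Write $f(c)=\max_{s\in S}|g(c,s)|$ for $c\in P$; the cell of a site $s$ is $C_s=\{c\in P:|g(c,s)|=f(c)\}$, a Voronoi edge is a maximal sub-arc of some bisector $b(s,s')$ along which $s$ and $s'$ are tied as farthest sites, and a Voronoi vertex is a point equidistant to three sites all of which are farthest. The first step is to observe that a Voronoi vertex is exactly the centre of an enclosing geodesic disk through three points of $S$: if $c$ is equidistant to $s_1,s_2,s_3$ with $|g(c,s_i)|=f(c)\ge|g(c,s)|$ for all $s\in S$, then $D(c,f(c))$ passes through $s_1,s_2,s_3$ and contains $S$; conversely such an enclosing disk forces its centre to be equidistant to three farthest sites. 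Hence the hypothesis of the lemma is equivalent to saying that $\mathrm{FVD}(S)$ has no vertex.

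I would then use the absence of vertices structurally. A Voronoi edge has no endpoint in the interior of $P$ (such an endpoint would be a vertex), and each bisector is a simple curve meeting $\partial P$ only at its two endpoints; hence every Voronoi edge is an \emph{entire} bisector $b(s,s')$. Moreover distinct Voronoi edges are disjoint, since they could meet only at a vertex. Thus the edges of $\mathrm{FVD}(S)$ form a family of pairwise non-crossing chords of the topological disk $P$, cutting $P$ into the Voronoi cells. If there is no edge at all, then $C_u=P$ for a single site $u$; choosing any $v\in S\setminus\{u\}$, every $c\in b(u,v)$ satisfies $f(c)=|g(c,u)|=|g(c,v)|$ and we are done. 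Otherwise the chords cut $P$ into at least two faces whose adjacency graph is a tree, so there is a leaf face $R$, incident to exactly one chord $e=b(u,v)$.

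The key geometric claim is that $R$ is precisely one of the two subpolygons into which $e$ divides $P$: every other chord is disjoint from $e$ and so lies entirely on one side of $e$, and if some chord lay on $R$'s side it would subdivide that side further, contradicting that $R$ is a leaf; hence $R$'s side of $e$ contains no other chord and equals $R$. In particular $R$ contains no Voronoi edge, so it lies inside a single Voronoi cell $C_u$, and $e=b(u,v)$ is the edge separating $C_u$ from $C_v$; therefore $u$ and $v$ are both farthest sites along all of $e$, i.e. $f(c)=|g(c,u)|=|g(c,v)|$ for every $c\in b(u,v)$. To conclude: every geodesic disk with $u$ and $v$ on its boundary has centre some $c\in b(u,v)$ and radius exactly $|g(c,u)|=|g(c,v)|=f(c)=\max_{s\in S}|g(c,s)|$, so it contains all of $S$, i.e. it is an enclosing geodesic disk.

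The main obstacle is not a computation but correctly importing the structural facts about the geodesic farthest-point Voronoi diagram — that the no-vertex hypothesis makes each edge a full bisector and makes distinct edges pairwise non-crossing — together with the elementary but slightly delicate topological step that a leaf face of such a non-crossing chord arrangement coincides with a full side of its chord; one should also dispatch the degenerate single-cell case and confirm that the geodesic disks through $u$ and $v$ are exactly those centred on $b(u,v)$.
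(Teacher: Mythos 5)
Your proof is correct, but it follows a genuinely different route from the paper's. The paper argues directly and elementarily: it takes the enclosing disk $D(x)=D(x,\max_{p\in S}|g(x,p)|)$ centred at a point $x\in S$, slides the centre along $g(x,y)$ towards the farthest point $y$ until a second point $z$ reaches the boundary (the disk stays enclosing while shrinking), and then observes that if some disk centred on $b(y,z)$ through $y,z$ failed to be enclosing, a third point of $S$ would have to appear on the boundary of an enclosing disk along the way, contradicting the hypothesis. You instead translate the hypothesis into the statement that the geodesic farthest-point Voronoi diagram of $S$ has no vertices, and then read the conclusion off its structure: with no vertices, every Voronoi edge is a complete bisector $b(u,v)$ (its endpoints must lie on $\partial P$, and $b(u,v)$ meets $\partial P$ only at its two endpoints under the paper's general-position assumptions), so $u$ and $v$ are farthest sites at every $c\in b(u,v)$, and hence every disk through $u$ and $v$, having radius $|g(c,u)|=f(c)$, encloses $S$. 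Two remarks on your write-up: the leaf-face/tree argument is not actually needed — once you know every edge is a full bisector, the pair $(u,v)$ of \emph{any} edge already works, and an edge always exists for $|S|\ge 2$ (your ``no edge, single cell $C_u=P$'' case is in fact vacuous, since the farthest site from $u$ itself is never $u$) — and the load-bearing imported fact is precisely that geodesic bisectors are simple curves touching $\partial P$ only at their endpoints, which is available from~\cite{aronov1987geodesic,aronov1993furthest}. The trade-off is clear: the paper's proof is self-contained and uses only continuity of geodesic distance, while yours buys a cleaner conceptual picture (no-vertex diagrams have only full-bisector edges) at the cost of invoking the structure theory of the farthest-point geodesic Voronoi diagram.
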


\begin{proof}
    Define the notation $ D(c)$ to be the disk $ D(c,\max_{p\in S}\lvert g(c,p)\rvert)$ which is the geodesic disk centered at an arbitrary point $c\in P$, and having as radius the maximum geodesic distance from~$ c $ to any other point $ p \in S $. By construction, this is an enclosing geodesic disk. Consider the disk $ D(x) $ centered on {an arbitrary} point $ x \in S $ with a point $ y\in S $ on the boundary of $ D(x) $. Either $D(x)$ has a second point $z$ on its boundary or only has $y$ on its boundary. If the latter is true, then we show how to obtain an enclosing disk with a second point on its boundary. Consider the disk {$D(c')$ as a point $ c' $} moves along $ g(x,y) $ towards $y$. Since the radius is shrinking, eventually, a second point $ z \in S $ will reach the boundary of {$D(c')$}. Note that {$D(c')$ must enclose $S$, with $y$ and $z$ on its boundary.}
    
    {We now consider the point $ c' $ moving along $ b(y,z) $ towards the boundary of the polygon. By definition,} $D(c')$ has $ y $ and $ z $ on its boundary. Moreover, {$D(c')$ must enclose $S$}. If {$D(c')$} is not enclosing, then {there must be a point $ w $ on $b(y, z)$ between $ b(y,z) \cap g(x,y) $ and $ c' $ such that a third point lies on the boundary of $D(w)$}, so we contradict the {assumption} that there is no enclosing geodesic disk through three points. The lemma follows.
\end{proof}

We now generalize the result from Hayward et al.~\cite{hayward1989some} for point sets in convex position to the geodesic setting. In particular, when $S$ is a set of {$n$} points in $P$ that is in geodesically convex position, then there is a pair of points such that any geodesic disk through them contains at least $ \left\lceil \frac{n}{3}\right\rceil + 1 $ points of $S$. 

\begin{theorem}
\label{lemma: plane diameter ball lower}
    For any $n\geq4$, $ \overline{\Pi}(n) \geq \left\lceil \frac{n}{3}\right\rceil + 1 $.
\end{theorem}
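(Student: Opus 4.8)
The plan is to mimic the classical Hayward--Toussaint type argument for points in convex position, but working throughout with geodesic disks and geodesic convexity. First I would set up a cyclic ordering of the points of $S$: since $S$ is in geodesically convex position, the points appear in a well-defined cyclic order $p_1, p_2, \ldots, p_n$ along the boundary of the geodesic convex hull of $S$. The key structural fact I want is the geodesic analogue of the planar statement ``for points in convex position, if $g(p_i,p_j)$ and $g(p_k,p_l)$ cross then the four points interleave in the cyclic order'' — this follows because two geodesics inside $P$ cross at most once (cited in the excerpt) and because the hull boundary is geodesically convex, so a chord $g(p_i,p_j)$ separates the hull into two pieces. Consequently, the graph $G$ from \lemref{hayward} (two points joined if some geodesic disk through them misses all other points of $S$), drawn with edges $g(u,v)$, is not merely planar but \emph{outerplanar} with respect to this cyclic order: a crossing pair would, by \lemref{quadrilateral}, violate the edge definition, and non-crossing chords of a convex cyclic arrangement form an outerplanar graph.

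Next I would exploit outerplanarity: an outerplanar graph on $n$ vertices has at most $2n-3$ edges, so among the $\binom{n}{2}$ pairs at least $\binom{n}{2} - (2n-3)$ are ``dominating'' in the sense that every geodesic disk through them contains another point of $S$. Then I would run the same averaging/double-counting scheme as in \thmref{hayward}: over all $\binom{n}{k}$ subsets of size $k$, each contributes at least $\binom{k}{2} - (2k-3) = \binom{k-2}{2}-1$ dominating pairs (using that a geodesically convex position subset is itself in geodesically convex position, which needs a short justification — a subset of points on a geodesically convex hull boundary is again in geodesically convex position, since the geodesic convex hull only shrinks). A pair dominating a $k$-subset means every geodesic disk through it contains one of the other $k-2$ points of that subset; averaging the count $\frac{[\binom{k-2}{2}-1]\binom{n}{k}}{\binom{n}{2}\binom{n-3}{k-3}}$ and optimizing over $k$ should yield a fraction tending to $\tfrac13$. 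Actually, to get the clean bound $\lceil n/3\rceil+1$ I expect one needs the sharper combinatorial route: taking $k = n$ directly, outerplanarity gives at least $\binom{n-2}{2}-1$ dominating pairs among all of $S$, and a pigeonhole/maximum-out-degree argument on the orientation $\overrightarrow{G}(S)$ (oriented consistently by \lemref{quadrilateral}) produces a pair whose every geodesic disk contains at least $\lceil n/3\rceil - 1$ other points — the $2n-3$ edge bound being exactly what turns the Euclidean $3n-6$ into the better convex-position constant. I would also invoke \lemref{exist1} to handle the degenerate regime: if no geodesic disk passes through three points of $S$, then some pair has \emph{every} geodesic disk through it enclosing, i.e.\ containing all $n$ points, which trivially beats $\lceil n/3\rceil+1$.

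The main obstacle I anticipate is establishing the outerplanarity claim rigorously in the geodesic setting: one must argue that drawing each edge of $G$ as the geodesic path $g(u,v)$ yields a plane drawing in which \emph{all} vertices lie on the outer face. The planarity (no crossings) is already delivered by \lemref{quadrilateral} exactly as in \lemref{hayward}; what is extra here is that every $p_i$ is on the outer boundary. This should follow from geodesic convexity: each $p_i$ lies on $\partial(\text{geodesic hull of }S)$, and all edges $g(p_i,p_j)$ lie inside that hull, so the region of $P$ outside the hull is a single face incident to every vertex — but making this precise requires care about the fact that the geodesic hull boundary consists of polygon edges and geodesic paths, and that ``outer face of the plane drawing'' is taken within the hull. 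A secondary technical point is the subset-closure of geodesically convex position, and ensuring the edge-count bound $|E|\le 2n-3$ is applied to the right graph. Once outerplanarity is in hand, the rest is the same averaging machinery as \thmref{hayward} (or the direct orientation argument), and matching the tight upper bound $\lceil n/3\rceil+1$ inherited from the Euclidean construction completes the proof.
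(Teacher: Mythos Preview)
Your proposal contains a genuine gap: the outerplanarity-plus-averaging route cannot reach the constant $\tfrac{1}{3}$. If you replace the planar bound $3k-6$ by the outerplanar bound $2k-3$ in the double counting of \thmref{hayward}, the number of dominating pairs in a $k$-subset becomes at least $\binom{k}{2}-(2k-3)=\tfrac{(k-2)(k-3)}{2}$, and the resulting coefficient of $n-2$ is $\tfrac{k-3}{k(k-1)}$, which is maximised at $k\in\{5,6\}$ with value $\tfrac{1}{10}$. So this yields only about $\tfrac{n}{10}$, not $\tfrac{n}{3}$. The alternative you sketch---taking $k=n$ and pigeonholing on out-degrees in $\overrightarrow{G}(S)$---fares no better: in convex position the number of crossing chord pairs is exactly $\binom{n}{4}$, so the average out-degree is $\binom{n}{4}/\binom{n}{2}=(n-2)(n-3)/12$, and since each point of $S$ is an endpoint of at most $n-3$ geodesics crossing $g(u_0,v_0)$ you recover only about $\tfrac{n}{12}$ distinct points. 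The heuristic that ``$2n-3$ in place of $3n-6$ turns the constant into $\tfrac13$'' is simply false for this style of argument.

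The paper's proof is not combinatorial and makes no use of the cyclic order. It argues via an \emph{extremal enclosing disk}: after disposing of the degenerate case with Lemma~\ref{prop:exist1}, one takes, among all enclosing geodesic disks with three points $u,v,w\in S$ on the boundary, the one minimising the maximum count among the three cap regions $D\cap P^+(u,v)$, $D\cap P^+(v,w)$, $D\cap P^+(w,u)$. Geodesic convex position is used precisely to ensure these three caps cover $S\setminus\{u,v,w\}$ (no point lies inside the geodesic triangle), so the heaviest cap holds at least $\lceil(n-3)/3\rceil$ points. One then slides the centre along $b(u,v)$ away from that heavy cap; \lemref{diskcontain} keeps the disk enclosing, and the minimality hypothesis forbids a new third boundary point from appearing before the centre reaches $\partial P$. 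Hence every disk through $u,v$ centred on that side of $b(u,v)$ is enclosing, while every disk centred on the other side contains the heavy cap by \lemref{diskcontain}. You should abandon the averaging scheme and adopt this extremal-disk argument.
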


\begin{proof}
    By Lemma~\ref{prop:exist1}, if there is no enclosing geodesic disk through three points, there exist two points of $S$ such that any geodesic disk through them is an enclosing geodesic disk. Then all such disks contain all points of~$S$, and the inequality holds.

    Otherwise, there is an enclosing disk $ D $ that has three points on its boundary, namely $x,y,z$ in clockwise order along the boundary of~$D$. Let $ f(\triangle(x,y,z)) $ be the maximum among the number of points from~$S$ contained in each of the three regions $ D \cap P^+(x,y), D \cap P^+(y,z) $ and $ D \cap P^+(z,x) $. {Note that, since $ S $ is geodesically convex, there is no point inside the geodesic core $ \triangledown (x, y, z)$, and each point (other than $ x, y, z$) lies in exactly one of the regions $ P^{+} $.} Among all enclosing disks with three points on the boundary, consider disk $ D_1 $ with points $ u,v, w \in S $ on its boundary which minimizes $ f(\triangle(u,v,w)) $. There must be at least one region among $ D_1 \cap P^+(u,v), D_1 \cap P^+(v,w) $ and $ D_1 \cap P^+(w,u) $ whose interior has at least $ \left\lceil \frac{n-3}{3}\right\rceil $ points. Assume, without loss of generality, that $ D_1 \cap P^+(u,v) $ does{, see Figure~\ref{fig:iterations} (left)}.

    \begin{figure}
        \centering
        \includegraphics{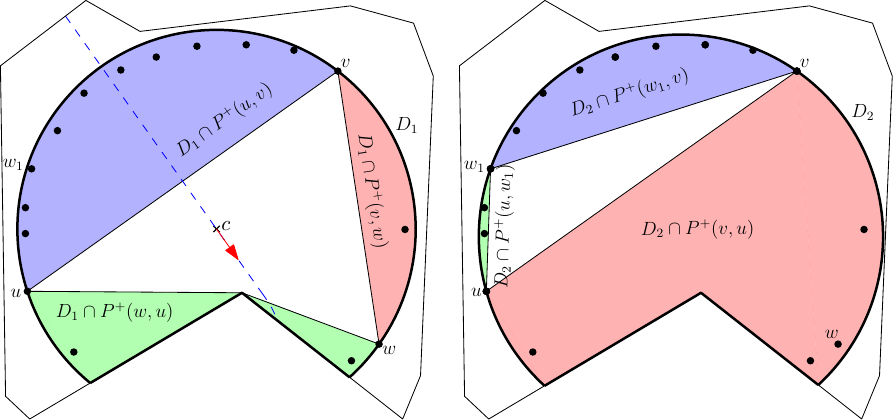}
        \caption{The geodesic disk $ D_2 $ is obtained by moving the center $ c $ of $ D_1 $ along the part of the bisector $ b(u,v)$ (shown dashed) to the right of $ \ell(u,c) $ until hitting point $ w_1 $.}
        \label{fig:iterations}
    \end{figure}

    If the interior of $ D_1 \cap P^-(u,v) $ contains fewer than $ \left\lceil \frac{n-3}{3}\right\rceil $ points, then we continuously change~$D_1$ by the disks through $u,v$ obtained when moving the center~$c$ of $ D_1 $ along the part of the bisector $ b(u,v) $ to the right of $ \ell(u,c) $, as in Figure~\ref{fig:iterations}. 
    As we move $c$, we may arrive at one of two situations: Either the center $c$ hits the boundary of the polygon, or the boundary of the current disk,~$D_2$, hits a point~$w_1\in D_1 \cap P^+(u,v)$. We show, by contradiction, that the latter case cannot occur. 
    If it occurred, the geodesic triangle $ \triangle (u,w_1,v) $ divides $ D_2 $ into three regions $ D_2 \cap P^+(u,w_1)$, $D_2 \cap P^+(w_1,v) $ and $ D_2 \cap P^+(v,u) $. Note that $ D_2 \cap P^+(v,u) = D_2 \cap P^-(u,v)$ contains fewer than $ \left\lceil \frac{n-3}{3}\right\rceil $ points of~$S$, since it is contained in $ D_1 \cap P^-(u,v) $. Also, $ D_2 \cap P^+(u,w_1) $ and $ D_2 \cap P^+(w_1,v) $ both contain fewer points than $ D_1\cap P^+(u,v) $. Thus, $ f(\triangle(u,w_1,v)) < f(\triangle(u,v,w)) $, and contradicting the minimality of $f(\triangle(u,v,w))$.

    Therefore, we will eventually hit the boundary. Then, every geodesic disk through $ u $ and~$ v $ with center in the part of $ b(u,v) $ to the right of $ \ell(u,c) $ contains all the points in $ S $, since the current~$D_1$ is an enclosing disk. Also, by Lemma~\ref{lem:diskcontain}, every geodesic disk through $ u $ and $ v $ with center in the part of $ b(u,v) $ to the left of $ \ell(u,c) $ contains $ D_1 \cap P^+(u,v) $, thus at least $ \left\lceil \frac{n-3}{3}\right\rceil $ points of $S$.  
    
    It follows that there exist two points $u,v$ such that  any geodesic disk through $u$ and $v$ contains at least $ \left\lceil \frac{n-3}{3}\right\rceil $ points of $S$ in its interior. By accounting for $u$ and $v$, on the disk boundary,  we conclude that  $ \overline{\Pi}(n) \geq \left\lceil \frac{n}{3}\right\rceil + 1 $. 
\end{proof}

\section{Diametral geodesic disk}
\label{sec:diametral}
In the Euclidean setting, Akiyama et al.~\cite{akiyama1996circles} showed
that for points in the plane (both in convex position and not) there always exists a diametral disk that contains $\left\lceil \frac{n}{3} \right\rceil + 1$ points of~$S$ and that this is a tight bound.
In this section, we consider the problem of proving lower bounds on the values of $ \Pi^{diam}(n) $ and $ \overline{\Pi}^{diam}(n) $, generalizing the results in~\cite{akiyama1996circles}. We prove that, among a set~$S$ of $n$ points, there is a pair of points such that the diametral geodesic disk through them contains at least $ \left\lceil \frac{n}{3} \right\rceil + 1 $ points of $ S $. 

In order to obtain these lower bounds, we are going to use an enclosing geodesic disk with its center inside the geodesic convex hull of the points on its boundary. 

\begin{lemma}
    \label{prop:exist2}
    Let $ S $ be a set of points in a polygon. There exists an enclosing geodesic disk $ D $ with (i) three points $ x,y,z $ on the boundary, and the center of $ D $ inside $ \triangle(x,y,z)$, or (ii) two points $ x,y $ on the boundary, and the center of $ D $ at the midpoint of $ g(x,y)$.
\end{lemma}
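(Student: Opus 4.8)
The plan is to let $D$ be a \emph{smallest} enclosing geodesic disk of $S$: choose $c^{*}\in P$ minimizing $r(c):=\max_{p\in S}|g(c,p)|$ and set $D=D(c^{*},r^{*})$ with $r^{*}=r(c^{*})$. Such a $c^{*}$ exists since $r$ is continuous and $P$ is compact; it is moreover unique because $|g(c,p)|$ is convex along every geodesic of $P$~\cite{pollack1989computing}, hence so is $r$, but only existence is needed. Let $A=\{p\in S:|g(c^{*},p)|=r^{*}\}$ be the set of points of $S$ on $\partial D$; it is nonempty (the maximum over the finite set $S$ is attained), and since general position forbids four geodesically cocircular points, $|A|\le 3$. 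If $A=\{x\}$ were a singleton, moving $c^{*}$ slightly along $g(c^{*},x)$ toward $x$ would strictly decrease $|g(\cdot,x)|$ while keeping every other point at distance $<r^{*}$ by continuity, contradicting the minimality of $r^{*}$. Hence $|A|\in\{2,3\}$.

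Everything then reduces to the first-order optimality of $c^{*}$. Write $d_p$ for the unit direction of the first edge of $g(c^{*},p)$ at $c^{*}$; moving $c^{*}$ infinitesimally in an admissible direction $u$ (one pointing into $P$) changes $|g(c^{*},p)|$ at rate $-\langle u,d_p\rangle$. If there is an admissible $u$ with $\langle u,d_p\rangle>0$ for every $p\in A$, then a small move along $u$ strictly decreases all distances in $A$ while leaving the remaining points of $S$ at distance $<r^{*}$, so $r$ decreases --- impossible. Thus no such $u$ exists. When $A=\{x,y\}$ this forces $d_x=-d_y$ (the cone $\{u:\langle u,d_x\rangle>0,\ \langle u,d_y\rangle>0\}$ is empty only then), i.e.\ $g(x,c^{*})\cup g(c^{*},y)$ is itself a geodesic, so $c^{*}\in g(x,y)$; being equidistant from $x$ and $y$, $c^{*}$ is the midpoint of $g(x,y)$, which is case (ii). Admissibility causes no trouble here: one may take $u$ inside the convex cone spanned by $d_x,d_y$, and $d_x,d_y$ already point into $P$.

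When $A=\{x,y,z\}$ the same dichotomy says that either some admissible $u$ is acute to all of $d_x,d_y,d_z$ (contradicting minimality) or the origin lies in $\mathrm{conv}\{d_x,d_y,d_z\}$ --- equivalently, $g(c^{*},x),g(c^{*},y),g(c^{*},z)$ leave $c^{*}$ in directions not contained in any half-plane. The crux, and the step I expect to be the main obstacle, is to show that this angular ``surrounding'' condition is equivalent to $c^{*}\in\triangle(x,y,z)$, so that minimality yields case (i). I would prove the needed direction ($c^{*}\notin\triangle(x,y,z)\Rightarrow$ not surrounding) by taking a geodesic nearest point $c'$ of $\triangle(x,y,z)$ to $c^{*}$, following each $g(c^{*},p)$ to its first point $\hat p$ on the triangle, and analysing the geodesic triangle $\triangle(c^{*},c',\hat p)$: the nearest-point property makes the core angle at $c'$ at least $\frac{\pi}{2}$, and then Lemmas~\ref{prop:geq} and~\ref{prop:leq}, together with the first-variation formula for geodesic length, should force the angle at $c^{*}$ between $g(c^{*},c')$ and $g(c^{*},\hat p)=g(c^{*},p)$ to be acute; taking $u$ to be the point of $\mathrm{conv}\{d_x,d_y,d_z\}$ nearest the origin (which lies in the cone spanned by the $d_p$, hence is admissible even when $c^{*}\in\partial P$, possibly at a reflex vertex) then gives $\langle u,d_p\rangle>0$ for all $p$.

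One genuine subtlety inside that argument is that a geodesic triangle need not be geodesically convex: two geodesics cross at most once, but a shortest path could still leave $\triangle(x,y,z)$ through one side and re-enter through another when a reflex vertex of $P$ lies in its interior, so the ``nearest point of $\triangle(x,y,z)$'' reasoning must be handled carefully. An alternative route, in case that fails, is to work directly with the three sub-triangles $\triangle(c^{*},x,y)$, $\triangle(c^{*},y,z)$, $\triangle(c^{*},z,x)$ and show that their angular sectors at $c^{*}$ tile a neighborhood of $c^{*}$ exactly when $c^{*}$ lies in $\triangle(x,y,z)$. This geodesic-triangle geometry, rather than the enclosing-disk framework, is where the real work lies.
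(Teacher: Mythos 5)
Your variational route (take a minimum enclosing geodesic disk, classify the contact set $A$, and use first-order optimality of the radius function) is genuinely different from the paper's argument, which never optimizes anything: the paper starts from an arbitrary enclosing disk through three points (or invokes Lemma~\ref{prop:exist1} when none exists, which already yields case~(ii)), and, when the center lies outside $\triangle(x,y,z)$, slides the center along the bisector $b(x,y)$ toward the midpoint $c_{xy}$; Lemma~\ref{lem:diskcontain} keeps the disk enclosing, and either the midpoint is reached (case~(ii)) or a third point is hit and the process recurses, terminating because the distance from the center to the nearest side of the current triangle strictly decreases. That constructive scheme sidesteps entirely the step on which your write-up hinges.

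And that step is a genuine gap: you explicitly leave unproven the claim that, when $A=\{x,y,z\}$ and the directions $d_x,d_y,d_z$ positively surround the origin, the center $c^{*}$ lies inside the geodesic triangle $\triangle(x,y,z)$ --- but this claim \emph{is} case~(i); everything before it (existence of the minimizer, $|A|\le 3$, the singleton exclusion, the antipodal-directions argument for $|A|=2$) is the routine part. Your sketch via a nearest point of $\triangle(x,y,z)$ runs into exactly the difficulty you name yourself: a geodesic triangle in $P$ need not be geodesically convex, so the nearest-point projection need not behave as in the convex/Euclidean case, and Lemmas~\ref{prop:geq} and~\ref{prop:leq} control side lengths, not the angle estimates you need; the proposed fallback (sectors at $c^{*}$ tiling a neighborhood) is not an argument, only a restatement of the desired equivalence. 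There are also smaller points left implicit --- that $d_x=-d_y$ makes $g(x,c^{*})\cup g(c^{*},y)$ a global geodesic uses that local geodesics in a simple polygon are shortest paths, and admissibility of perturbation directions when $c^{*}$ lies on $\partial P$ needs the paper's general-position assumption that bisectors avoid vertices of $P$ --- but these are repairable; the surrounding-implies-inside-the-triangle step is the missing core, and without it the proof does not establish case~(i).
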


\begin{proof}
    {If there does not exist an enclosing disk with three points on its boundary, then}, by Lemma~\ref{prop:exist1}, there exist two points~$ x, y \in S$ such that every geodesic disk through them is a geodesic enclosing disk. In particular, the disk $ D\left(c_{xy}, \frac{\lvert g(x,y)\rvert}{2}\right) $ is an enclosing diametral disk with $ x $ and $ y $ as its diametral endpoints.
    
    {Suppose therefore that there exists an enclosing disk with three points on its boundary. Let~$ D $ be such a disk, and} let $ x,y,z \in S $ be the three points on the boundary of $ D $, in clockwise order. If the center of the disk is inside the geodesic triangle $ \triangle(x,y,z) $, we are done. Otherwise, the center is inside one of the regions $ D \cap P^+(x,y), D \cap P^+(y,z) $ or $ D \cap P^+(z,x) $. Assume, without loss of generality, it is in $ D \cap P^+(x,y) $. Then we move the center $ c $ of~$ D $ along $ b(x,y) $ towards $ c_{xy} $. 
    If the center $ c $ reaches $ c_{xy} $, then $ D\left(c_{xy}, \frac{\lvert g(x,y)\rvert}{2}\right) $ is a diametral enclosing geodesic disk with $ x $ and $ y $ as its diametral endpoints. 
    If the disk hits a third point $ w\in S $, and the center of the disk is inside $ \triangle(x,y,w) $, we are done. Otherwise, we repeat the process with the points $ x,y, w $ as the new points $ x, y, z $ until the center of the disk through the three points on the boundary of the disk lies inside the geodesic triangle formed by the three points on the boundary. This process ends because the distance {from $c$ to the closest point on the geodesic $g(x,y)$} is decreasing at every iteration.
\end{proof}

\begin{theorem}
\label{thm:diamnover3}
    For any $n\geq3$, $ \Pi^{diam}(n) = \overline{\Pi}^{diam}(n) \geq \left\lceil \frac{n}{3} \right\rceil + 1 $.    
\end{theorem}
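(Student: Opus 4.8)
The plan is to mimic the Euclidean argument of Akiyama et al., with the smallest enclosing disk replaced by the enclosing geodesic disk furnished by Lemma~\ref{prop:exist2}. First apply that lemma. In case (ii) the disk is a diametral enclosing disk $D\!\left(c_{xy},\tfrac{|g(x,y)|}{2}\right)$, which then contains all $n$ points, so the bound is immediate. So suppose we are in case (i): there is an enclosing geodesic disk $D$ with center $c$, three points $x,y,z\in S$ on $\partial D$, and $c\in\triangle(x,y,z)$; note that $|g(c,x)|=|g(c,y)|=|g(c,z)|$ forces $c\in b(x,y)\cap b(y,z)\cap b(z,x)$.

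The core of the argument is to show that the three diametral disks $D_{xy}:=D\!\left(c_{xy},\tfrac{|g(x,y)|}{2}\right)$, $D_{yz}$, $D_{zx}$ together cover $D$. Since $D$ is geodesically convex, $\triangle(x,y,z)\subseteq D$, and the three arcs into which $x,y,z$ split $\partial D$, together with $g(x,y),g(y,z),g(z,x)$, partition $D$ into $\triangle(x,y,z)$ and three ``lens'' regions; let $R_{xy}$ be the one bounded by $g(x,y)$ and the arc of $\partial D$ from $x$ to $y$ missing $z$. For the lenses: for $a\in R_{xy}$, the geodesic $g(c,a)$ must leave $\triangle(x,y,z)$ across $g(x,y)$, say at $m$, so $|g(m,a)|=|g(c,a)|-|g(c,m)|$; since $a\in D$ this is $\le|g(c,x)|-|g(c,m)|\le|g(m,x)|$, whence $|g(c_{xy},a)|\le|g(c_{xy},m)|+|g(m,a)|\le|g(c_{xy},x)|=\tfrac{|g(x,y)|}{2}$ (take $y$ in place of $x$ if $m$ lies on the $y$-side of $c_{xy}$). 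Thus $R_{xy}\subseteq D_{xy}$ — this is essentially Lemma~\ref{lem:diskcontain} for the pair $x,y$ with centers $c,c_{xy}$ — and symmetrically $R_{yz}\subseteq D_{yz}$, $R_{zx}\subseteq D_{zx}$.

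It remains to show that each point $a\in S$ lying in $\triangle(x,y,z)$ belongs to one of $D_{xy},D_{yz},D_{zx}$. The geodesics $g(a,x),g(a,y),g(a,z)$ stay inside $\triangle(x,y,z)$ (a geodesic crosses each of $g(x,y),g(y,z),g(z,x)$ at most once) and split it into $\triangle(a,x,y),\triangle(a,y,z),\triangle(a,z,x)$; the angular sectors they cut out at $a$ sum to $2\pi$, so one is at least $\tfrac{2\pi}{3}$ — say the one facing $\triangle(a,x,y)$, which is the convex angle $\angle xay$ of the geodesic core at the core vertex $a$. The hard part will be to conclude from $\angle xay\ge\tfrac{2\pi}{3}$ that $|g(a,c_{xy})|\le\tfrac{|g(x,y)|}{2}$, i.e.\ $a\in D_{xy}$: in the plane this is Thales' theorem, but here there is no such shortcut. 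I would split $\angle xay$ at the geodesic from $a$ to $c_{xy}$ and feed the two pieces into Lemma~\ref{prop:geq} and Lemma~\ref{prop:leq} (using also that $m\mapsto|g(a,m)|$ is convex along $g(x,y)$, hence maximized at $x$ or $y$). Orchestrating these two triangle-side inequalities — the very reason they are proved in Section~\ref{sec:comparison} — is the main obstacle, and the degenerate configurations (geodesics sharing initial segments) need a little extra care.

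Finally, counting: each of $x,y,z$ lies on the boundary of two of the diametral disks (e.g.\ $x\in D_{xy}\cap D_{zx}$) and every other point of $S$ lies in at least one of them, so $|S\cap D_{xy}|+|S\cap D_{yz}|+|S\cap D_{zx}|\ge(n-3)+6=n+3$; hence the largest of these three, which is the diametral disk of a pair of points of $S$, contains at least $\lceil(n+3)/3\rceil=\lceil n/3\rceil+1$ points of $S$. The argument never uses convexity, so it proves both $\Pi^{diam}(n)\ge\lceil n/3\rceil+1$ and $\overline{\Pi}^{diam}(n)\ge\lceil n/3\rceil+1$; combined with the Euclidean upper bounds (which transfer by enclosing the extremal point set in a large polygon) this yields the stated equality.
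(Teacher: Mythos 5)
Your overall plan is the paper's plan: invoke Lemma~\ref{prop:exist2}, show the three diametral disks of the boundary triple cover the enclosing disk, and finish by pigeonhole (your count $\lceil (n+3)/3\rceil = \lceil n/3\rceil+1$ matches the paper's). But there is a genuine gap at exactly the step you flag as ``the hard part'': you never prove that a point $a\in\triangle(x,y,z)$ with $\angle xay\geq\frac{2\pi}{3}$ lies in $D_{xy}$, you only say you ``would split the angle and feed the pieces into Lemmas~\ref{prop:geq} and~\ref{prop:leq}'' and call the orchestration ``the main obstacle.'' Since this is the crux of the covering argument, the proof is incomplete as written. The paper closes it as follows: let $o=c_{xy}$ be the midpoint of $g(x,y)$; the geodesic $g(a,o)$ splits the large angle, so one of $\angle xao,\angle oay$ is at least $\frac{\pi}{3}$, say $\angle oay$. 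If $o$ is \emph{not} visible from $a$, then $\angle oay$ coincides with the full angle $\angle xay\geq\frac{2\pi}{3}>\frac{\pi}{2}$, and Corollary~2 of~\cite{pollack1989computing} directly gives $|g(a,o)|<|g(o,y)|$. If $o$ \emph{is} visible from $a$ (so $g(a,o)$ is a segment), Lemma~\ref{prop:geq} applied at $a$ gives $|g(o,y)|\geq\min\{|g(a,o)|,|g(a,y)|\}$ and Lemma~\ref{prop:leq} gives $|g(a,o)|=|ao|\leq\max\{|g(a,y)|,|g(o,y)|\}$; chaining the two yields $|g(a,o)|\leq|g(o,y)|=\frac{|g(x,y)|}{2}$. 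Note the visibility case distinction is essential because Lemma~\ref{prop:leq} bounds a Euclidean length, and the convexity of $m\mapsto|g(a,m)|$ along $g(x,y)$ that you mention is not what does the work here (the paper uses that convexity only inside Lemma~\ref{lem:diskcontain}).

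A secondary remark on your treatment of the lens regions: the paper simply cites Lemma~\ref{lem:diskcontain} (the center of the enclosing disk lies in $\triangle(x,y,z)$, hence not in $D\cap P^{+}(x,y)$, so the diametral disk centered at $c_{xy}\in g(x,y)$ absorbs that region). Your direct recomputation tacitly assumes that $g(c,a)$ crosses $g(x,y)$ itself, whereas for $a\in D\cap P^{+}(x,y)$ the crossing with $\ell(x,y)$ could a priori occur on an extension segment beyond $x$ or $y$, where the identity $|g(c_{xy},m)|+|g(m,x)|=|g(c_{xy},x)|$ no longer holds; invoking Lemma~\ref{lem:diskcontain} as the paper does avoids this case analysis. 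The remaining ingredients (case~(ii) of Lemma~\ref{prop:exist2}, the double-counting of $x,y,z$, and the transfer of the Euclidean upper bound to get the stated equality) are fine and agree with the paper.
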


\begin{proof}    
     Consider an enclosing disk $D$ as guaranteed by Lemma~\ref{prop:exist2}. Either (i)~$D$ has exactly three points of $S$ on the boundary, and the center of the geodesic disk is inside the geodesic triangle formed by the three points, or (ii)~$D$ has two points of $S$ as diametral endpoints. 
    For case~(ii), $ D $ having two points on the boundary, the claimed result holds, since $D$ contains all the points.

    For case~(i), let $ a, b $ and $ c $ be the three points on the boundary of the geodesic disk. We will prove that any point $ x \in S $ is contained in at least one of the geodesic disks $ D\left(c_{ab},\frac{\lvert g(a,b)\rvert}{2}\right) $, $ D\left(c_{bc},\frac{\lvert g(b,c)\rvert}{2}\right) $ and $ D\left(c_{ca},\frac{\lvert g(c,a)\rvert}{2}\right) $.

    First, consider the case where $ x $ is outside the geodesic triangle $ \triangle (a,b,c) $. Then, $ x $ belongs to one of the regions $ D\cap P^+(a,b), D\cap P^+(b,c) $, and $ D\cap P^+(a,c) $. Without loss of generality, assume $ x \in D \cap P^+(a,b) $. By Lemma~\ref{prop:exist2}, the center of the geodesic disk is inside $ \triangle (a,b,c) $, so it does not belong to $ D \cap P^+(a,b) $. Then, by Lemma~\ref{lem:diskcontain}, $ x $ is contained in the diametral geodesic disk $ D\left(c_{ab},\frac{\lvert g(a,b)\rvert}{2}\right) $.

    Now, we prove the case where $ x \in \triangle (a,b,c) $. Consider the geodesic paths $ g(a,x), g(b,x) $ and $ g(c,x) $. One of the angles $ \angle axb, \angle bxc $ or $ \angle axc $ is greater than or equal to $ \frac{2\pi}{3} $
    . Assume, without loss of generality, that $ \angle bxc \geq \frac{2\pi}{3} $, see blue angle in Figure~\ref{fig:diametrical}. Let $ o $ be the midpoint of the shortest path $ g(b,c) $. We also know that one of the angles $ \angle bxo $ or $ \angle oxc$ is greater than or equal to $ \frac{\pi}{3} $. Assume, without loss of generality, that it is the angle $ \angle oxc $. When $ o $ is not visible from $ x $, the angle $ \angle oxc $ is equal to the angle $ \angle bxc \geq \frac{2\pi}{3} > \frac{\pi}{2} $. Hence, by~\cite[Corollary~$ 2 $]{pollack1989computing}, $ \lvert g(x,o)\rvert < \lvert g(o,c)\rvert = \lvert g(o,b)\rvert $, and $ x $ is contained in the geodesic disk $ D\left(c_{bc},\frac{\lvert g(b,c)\rvert}{2}\right) $. In the case of $ o $ being visible from $ x $, by Lemma~\ref{prop:geq}, $ \lvert g(o,c)\rvert \geq \min\{\lvert g(c,x)\rvert, \lvert g(x,o)\rvert\} $ and, by Lemma~\ref{prop:leq}, $ \lvert g(x,o)\rvert \leq \max\{\lvert g(c,x)\rvert, \lvert g(o,c)\rvert\} $, hence $ \lvert g(x,o)\rvert \leq \lvert g(o,c)\rvert = \lvert g(o,b)\rvert $, and $ x $ is contained in the geodesic disk $ D\left(c_{bc},\frac{\lvert g(b,c)\rvert}{2}\right) $.

    \begin{figure}[tb]
	\centering
	\includegraphics{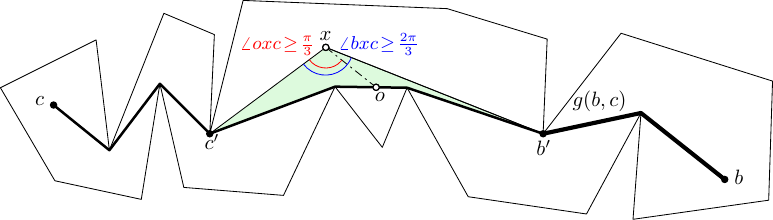}

    \caption{The point $ o $ is visible from $ x $.}
	\label{fig:diametrical}
\end{figure}
    
    Then, the union of the geodesic disks $ D\left(c_{ab},\frac{\lvert g(a,b)\rvert}{2}\right), D\left(c_{bc},\frac{\lvert g(b,c)\rvert}{2}\right) $ and $ D\left(c_{ca},\frac{\lvert g(c,a)\rvert}{2}\right) $ covers all the points in $ P $. Since the points $ a, b $ and $ c $ are counted twice,
    \begin{equation*}
        \Bigg\lvert D\left(c_{ab},\frac{\lvert g(a,b)\rvert}{2}\right) \cap P\Bigg\rvert + \Bigg\lvert D\left(c_{bc},\frac{\lvert g(b,c)\rvert}{2}\right) \cap P\Bigg\rvert + \Bigg\lvert D\left(c_{ca},\frac{\lvert g(c,a)\rvert}{2}\right) \cap P\Bigg\rvert > n + 3.
    \end{equation*}
    
    By the pigeonhole principle, the result follows.
\end{proof}

\section{Bichromatic geodesic disks}
\label{sec:bichromatic}

In this section, we consider the bichromatic version of the problem, which has also received attention in the Euclidean setting. 
Let $B$ be a set of $\frac{n}{2}$ blue points and $R$ be a set of $\frac{n}{2}$ red points inside a polygon $P$, and let $S = B \cup R$.

\subsection{{A Constant Fraction of Bichromatic Points Inside Geodesic Disks}}

Following the procedure in~\cite{urrutiaproblemas}, we obtain a similar result to Theorem~\ref{thm:urrutia}, and we prove that it is possible to find a bichromatic pair such that any geodesic disk containing them contains at least $ \left\lceil \frac{{n}-2}{72}\right\rceil $ points of $S$.

\begin{theorem}
    \label{thm:urrutiabichromatic}
    For any $ n\geq 6 $, $ \Pi^{bichrom}(n) \geq \left \lceil\frac{n-2}{72} \right \rceil \approx 0.0139n$.
\end{theorem}

\begin{proof}
    {Analogously to Section~\ref{sec:main}, we define the intersection graph for the bichromatic setting as follows.} Let $ G(S) = (V(G(S)), E(G(S))) $ be the intersection graph defined as follows: $ V(S) = \{g(u,v) \mid u,v \in S, u \text{ is red, and } v \text{ is blue}\} $, and two vertices $ g(u,v), g(x,y) $ are joined by an edge in $ G(S) $ if {$ g(u,v) $ intersects $ g(x,y) $ and $ u,v,x,y $ are four distinct points}. The oriented graph $ \overrightarrow{G}(S) $ of $ G(S) $ is defined orienting an edge $ {\{g(u, v), g(x, y)\}} $ as $ g(u, v) \rightarrow g(x, y) $ if any geodesic disk through $ u $ and $ v $ contains $ x $ or $ y $, or otherwise orienting $ g(x, y) \rightarrow g(u, v) $. By Lemma~\ref{lem:quadrilateral}, if two of these paths intersect, then one of the two bichromatic pairs fulfills the property that any geodesic disk through it will contain one of the other two endpoints, so this orientation is consistent. 
    
    We now obtain the minimum number of intersections between all pairs of geodesic paths. Let, as in Theorem~\ref{thm:urrutia}, $ I(S) $ be the number of pairs of geodesic paths that intersect. For each $ S' \subset S $ with exactly $ 3 $ blue and $ 3 $ red points, by Kuratowski's Theorem~\cite{kuratowski1930probleme}, there are two blue points $ w, x \in S' $ and two red points $ y, z \in S' $ such that $ g(w,y) \cap g(x,z) \neq \emptyset $. There are $ \dbinom{\frac{n}{2}}{3}^2 $ subsets of this type. Furthermore, every intersection belongs to exactly $ \left(\frac{n}{2}-2\right)^2 $ subsets of this type. Dividing by this number, we remove repetitions, and we obtain that $ I(S) \geq \left(\frac{\binom{\frac{n}{2}}{3}}{\frac{n}{2}-2}\right)^2$.

    By definition of the intersection graph $ G(S) $, $ \lvert E(G(S))\rvert = I(S) $ and $ \lvert V(G(S))\rvert = \left(\frac{n}{2}\right)^2 $. Hence, there exists one vertex $ g(u_0,v_0) \in \overrightarrow{G} $ with out-degree at least $ \frac{I(S)}{\left(\frac{n}{2}\right)^2} $.

    Then, we have that $ \displaystyle
    \frac{I(S)}{\left(\frac{n}{2}\right)^2} \geq \left(\frac{\frac{n}{2}-1}{6}\right)^2 $. This means that there exist two points $ u_0 $ and~$ v_0 $, one of each color, such that the geodesic path between them intersects at least $ \left(\frac{\frac{n}{2}-1}{6}\right)^2 $ other geodesic paths. Since every point is the endpoint of at most $ \frac{n}{2}-1 $ geodesic paths {(with the minus one accounting for $ u_0 $ or $ v_0 $)}, every geodesic disk through this pair of points contains at least $ \left(\frac{\frac{n}{2}-1}{6}\right)^2 \frac{1}{\frac{n}{2}-1} = \frac{\frac{n}{2}-1}{36} = \frac{n-2}{72}$ points of $ S $.
\end{proof}

It was proved in~\cite{prodromou2007combinatorial} that any set of bichromatic points $ S $, contains a pair $ \{p,q\} \subset S $, one point of each color, such that any disk through $ p $ and $ q $ contains a positive fraction of the points of~$ S $, in particular $ \frac{1}{36} $. {Although} the analogous colored result in the geodesic setting {does not give the optimal bound, we present it here to reflect the geodesic translation of a classical technique.}

\begin{theorem}
    \label{thm:prodroumou}
    For any $ n\geq 6 $, $ \Pi^{bichrom}(n) \geq \left \lceil\frac{n-2}{36} \right \rceil +2 \approx 0.0277n$.
\end{theorem}

\begin{proof}
    Let $B$ be the set of blue points, and let $R$ be the set of red points. By Kuratowski's Theorem and Lemma~\ref{lem:quadrilateral}, for every set $ Z \subset S $ of six points, three being blue and three being red, there is a pair $ \{u,v\} $ of bichromatic points such that any geodesic disk with $ u $ and $ v $ on its boundary contains at least one point of $ Z \setminus \{u,v\} $. Hence, there is a family $ \mathcal{Z} $ and a pair $ \{u,v\} \subset S $ of bichromatic points such that $ \{u,v\} $ belongs to every $ Z \in \mathcal{Z} $ and
    \begin{equation}
        \label{eq:lower}
        \lvert \mathcal{Z}\rvert \geq \frac{\dbinom{\frac{n}{2}}{3}^2}{\dbinom{\frac{n}{2}}{1}^2} = \frac{\left(\frac{n}{2}-2\right)^2\left(\frac{n}{2}-1\right)^2}{6^2}.    
    \end{equation}
    
    Now we will upper-bound the value $ \lvert \mathcal{Z}\rvert $. Let $ D $ be a geodesic disk with $ u $ and $ v $ on its boundary, and $ m-2 $ points in its interior. Each $ Z \in \mathcal{Z} $ contains a point of $ D \setminus \{u,v\} $. The set $ \{u,v\} $ can be extended to $ Z $, given that $ D \cap (Z \setminus \{u,v\}) \neq \emptyset $, by (i) choosing a point of~$ Z $ from the remaining $ m-2 $ points in $ D $, (ii) choosing a point of the same color class from the remaining $ \frac{n}{2}-2 $ points in $ S $, and (iii) choosing two points in $ S $ from the remaining color class. Thus, we get that
    \begin{equation}
        \label{eq:upper}
        \lvert \mathcal{Z}\rvert \leq \dbinom{m - 2}{1}\dbinom{\frac{n}{2}-2}{1}\dbinom{\frac{n}{2}-1}{2} = (m-2)\frac{\left(\frac{n}{2}-2\right)^2\left(\frac{n}{2}-1\right)}{2}.
    \end{equation}
    
    Finally, from equations (\ref{eq:lower}) and (\ref{eq:upper}) we get
    $\displaystyle
        m-2 \geq \frac{\frac{n}{2}-1}{18} \Longrightarrow m \geq \frac{\frac{n}{2}+35}{18} = \frac{n+70}{36}.
    $
\end{proof}

{We conclude by adapting the approach of Edelsbrunner et al.~\cite{edelsbrunner1989circles}. Among the classical techniques considered in this section, this method turns out to produce the strongest lower bound in the geodesic setting.}

\begin{theorem}
    \label{thm:novereleven}
    For any $ n\geq12 $, $ \Pi^{bichrom}(n) \geq \left\lceil\frac{n}{6+\sqrt{26}}\right\rceil + 1 \approx \left\lceil\frac{n}{11.1}\right\rceil \approx 0.09n$.
\end{theorem}

\begin{proof}
Similar to the proof of Theorems~\ref{thm:noverfive} and \ref{thm:noverfive2}, we want to find a bisector that does not contain an edge of the order-$1$ through order-$k$ geodesic Voronoi diagrams, however we only consider bisectors $b(p,q)$ where $p$ and $q$ are of different color. {Let $ b_k(p,q) $ be the union of the edges of the order-$ 1 $ through order-$k $ diagrams on $ b(p,q) $.} The edges of $b_k(p,q)$ form a number of connected components (which is strictly less than the total number of edges of all the diagrams). Let $\lambda_{k} $ be the {sum of the} number of connected components {formed by the edges of $ b_k(p,q) $ over \emph{all} $p,q \in S $ of different color}. The goal is to find the largest value of $k$ such that $\lambda_{k} < \left(\frac{n}{2}\right)^2$. Since every bisector must be covered by one connected component, when $\lambda_k < \left(\frac{n}{2}\right)^2 $, we note that there exists a bisector $b(p,q)$ that does not have any edges coming from the order-$1$ through order-$ k $ geodesic Voronoi diagrams. Therefore, every geodesic disk with $p$ and $q$ on its boundary contains at least $k$ points.

By considering how the pieces must ``pair up'', in the sense that a piece $s$ with $\rho(s) = i$ can only be adjacent to pieces $ s' $ with $\rho(s') = i \pm 1$, one can follow the reasoning from Theorem~\ref{thm:noverfive} to show that $ \lambda_k \leq 3kn-\frac{5}{2}k^2- \frac{5}{2}k$, which is a quadratic polynomial in $k$ being strictly less than the number $ \left(\frac{n}{2}\right)^2$ of geodesic bichromatic bisectors whenever $ k < \frac{n}{6+\sqrt{26}} $, therefore the same arguments at the end of the proof of Theorem~\ref{thm:noverfive} lead to $\Pi^{bichrom}(n) \geq \left\lceil\frac{n}{6+\sqrt{26}}\right\rceil+1 $.
\end{proof}

\subsection{{A Constant Fraction of Bichromatic Points Inside and Outside Geodesic Disks}}

In this section, we prove a bound on the number of points both inside and outside the disk in the geodesic setting. This bound has not been shown in the Euclidean setting, however, its proof is similar to the proof of Theorem \ref{thm:noverfive2}.

\begin{theorem}
    \label{thm:bichrom-inout}
    For any $ n\geq 28$, $\Pi^{bichrom\mbox{-}in\mbox{-}out}(n) \ge \left\lceil\frac{n}{14+2\sqrt{43}}\right\rceil+1 \approx \left\lceil\frac{n}{27.1}\right\rceil \approx 0.0369n$.
\end{theorem}

\begin{proof}
    Following the proof of Theorem~\ref{thm:noverfive2}, we consider {the union of the} edges of both the order-$1$ through order-$k$ geodesic Voronoi diagrams, and the {union of the} edges of the order-$(n-k)$ through order-$(n-1)$ geodesic Voronoi diagrams. Recall the upper bound of $\lambda_k+\overline\lambda_{n-k} \le 7kn-6k^2-6k$ on the total number of connected components formed by these edges. Then as in the proof of Theorem~\ref{thm:novereleven}, we consider only bisectors of points $p$ and $q$ of different color. There are $(\frac{n}{2})^2$ such bisectors, and if $\lambda_k+\overline\lambda_{n-k}<(\frac{n}{2})^2$, then there exists at least one bisector which does not contain an edge of the order-$1$ through order-$k$ geodesic Voronoi diagrams, or an edge of the order-$(n-k)$ through order-$(n-1)$ geodesic Voronoi diagrams. This holds as long as $k<\frac{n}{14+2\sqrt{43}}$, and the theorem follows.
\end{proof}

\subsection{Upper bound}

In this section we prove the following upper bound on the number of points contained in any geodesic disk with a pair of bichromatic points on its boundary.

\begin{theorem}
    \label{thm:upper}
    For any $ n\geq6 $, $ \Pi^{bichrom}(n) \leq \left\lceil \frac{n}{5}\right\rceil + 1 $.
\end{theorem}

\begin{proof}
    We show how to construct a configuration of $ n $ bichromatic points $ S $ so that for every pair of bichromatic points $ u, v \in S $, $ \Pi^{bichrom}(n) \leq \left\lceil \frac{n}{5}\right\rceil+1 $. 
    Draw an equilateral triangle with sides of unit length and vertices labeled $ x, y $, and $ z $. Let $ s_1, s_2 $ and $ s_3 $ be, respectively, the midpoints of edges $ {yz}, {xz} $ and $ {xy} $. Let $ v $ be a point on line segment $ {ys_2} $ one unit from $y$ and farther from $s_2$ than from $y$, and let $ w $ be a point on line segment $ xs_1$ one unit from $x$ and farther from $s_1$ than from $x$. 
    Place {a set $ S_A $ of} $ \left\lfloor \frac{n}{5} \right\rfloor $ red points on line segment $ {xy} $ {near $ y $ (distance less than $ \frac{1}{4} $ is enough)}, and {a set $ S_B $ of} $ \left\lfloor \frac{n}{5} \right\rfloor $ blue points on line segment $ zx $ {near $ x $}. Place {a set $ S_C $ of} $ \left\lfloor \frac{n}{5} \right\rfloor $ red points on line segment $ ws_2 $ near $ w $, and {a set $ S_D $ of} $ \left\lfloor \frac{n}{5} \right\rfloor $ blue points on line segment $ vs_3 $ near $ v $. Place{, in any way, a set $ S_E$ of} the remaining $ \left\lfloor \frac{n}{5} \right\rfloor $ points on line segment $ zy $ {near $ z $}. 
    {The points must be placed according to this construction to achieve the desired bound. Figure~\ref{fig:upper} illustrates one possible configuration. Following the strategy in~\cite{urrutiaproblemas}, we} show that through any pair $ p,q $ of bichromatic points there is a disk containing {at most} $ \left\lceil \frac{n}{5} \right\rceil + 1 $ points. {If $ p,q \in S_A \cup S_B \cup S_C \cup S_E $, the result is true because of the reasoning in~\cite{urrutiaproblemas}. 
    Moreover, if $ p \in S_D $ we have three cases: (i) if $ q \in S_A $, the disk through $ p $ and $ q $ and tangent to the line $ xy $ contains at most all the points in $ S_D $, plus the point $ q $, (ii) if $ q \in S_C $, the disk through $ p $ and $ q $ and tangent to the line $ vs_3 $ contains at most all the points in $ S_C $, plus the point $ q $, and (iii) if $ q \in S_E $, the disk through $ p $ and $ q $ and tangent to the line $ zy $ contains at most all the points in $ S_D $, plus the point $ q $.}
    The points can be placed in a large enough polygon without affecting the construction.
    \begin{figure}[tb!]
	\centering
	\includegraphics{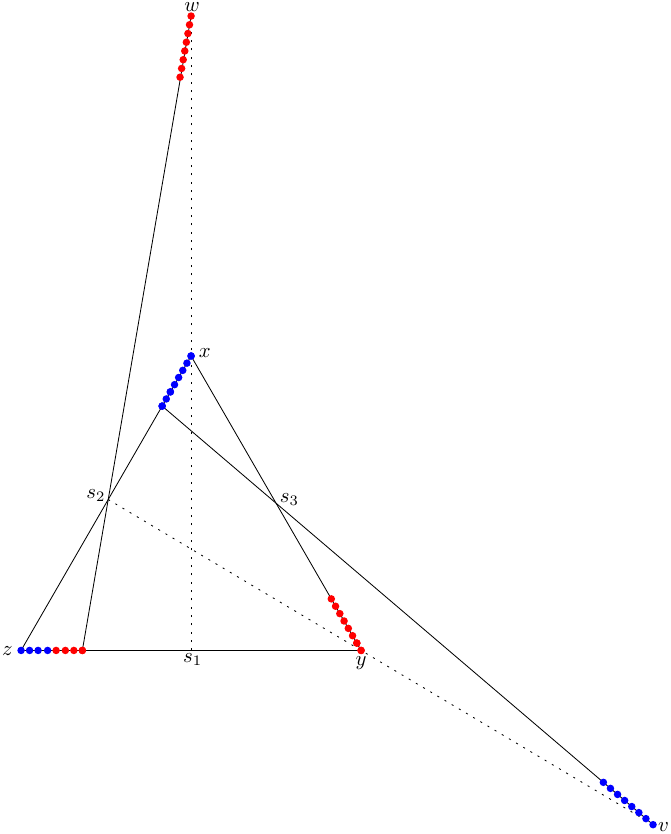}
	\caption{General bichromatic configuration. The points are enclosed in a large enough polygon (not shown).}
	\label{fig:upper}
    \end{figure}
\end{proof}

Note that this upper bound is also valid in the Euclidean setting.

\section{Conclusion} \label{sec:conclusions}

We study different variants of the following question: Given a set $ S $ of $ n $ points in a polygon~$ P $, does there always exist two points $ x, y \in S $, such that every geodesic disk containing $ x $ and $ y $  contains a constant fraction of the points of~$ S $? This question has been studied intensely in the Euclidean setting ~\cite{akiyama1996circles,claverol2021circles,edelsbrunner1989circles,hayward1989note, hayward1989some,neumann1988combinatorial,prodromou2007combinatorial, ramos2009depth,urrutiaproblemas}. We focus on the geodesic versions of this problem. An obvious open problem is the following: Can our lower bounds be improved? Basically, some of our geodesic lower bounds are not as strong as in the Euclidean case. It would be interesting to determine whether there is a separation between the bounds in the Euclidean and geodesic case. Additionally, in the Euclidean setting the best-known lower bound for the inside-outside version of the problem match the best known lower bounds for the original version of the problem exactly, whereas we proved different lower bounds in this paper. Is this difference simply an artifact of our proof or do these two versions of the problem actually have different answers? Note that even in the Euclidean case it is not known whether the two versions of the problem have different answers or not. Finally, the main open problem which has been open since 1988 is whether we can find a tight bound for the original question~\cite{neumann1988combinatorial}.

\noindent \textbf{Acknowledgments}
    G. E., D. O. and R. S. were partially supported by grant PID2023-150725NB-I00 funded by MICIU/AEI/10.13039/501100011033. P. B. and T. T. were supported in part by NSERC.

\bibliography{ref}
\end{document}